\newtheorem{theorem}{Theorem}
\newtheorem{definition}{Definition}
\newtheorem{lemma}{Lemma}
\theoremstyle{definition}
\newtheorem{remark}{Remark}
\DeclareMathOperator{\subjectto}{s.t.}
\def\nnil{\nil}
\newcounter{prob}
\author{Santiago Paternain$^\dagger$, Miguel Calvo-Fullana$^\star$, Luiz F. O. Chamon$^\S$ and Alejandro Ribeiro$^\ddagger$
\thanks{Work supported by ARL DCIST CRA W911NF-17-2-0181 and the Intel Science and Technology Center for Wireless Autonomous Systems. $^\dagger$Electrical Computer and Systems Engineering, Renssealaer Polytechnic Institute. Email: paters@rpi.edu. $^\star$Department of Aeronautics and Astronautics, Massachusetts Institute of Technology. Email: cfullana@mit.edu. $^\S$ University of California, Berkeley. Email: lfochamon@berkeley.edu.
$^\ddagger$ Department of Electrical and Systems Engineering, University of Pennsylvania. Email: aribeiro@seas.upenn.edu.
}}
\renewcommand{\comment}[1]{}
\newcolumntype{S}{>{\centering\arraybackslash} m{.10\linewidth} }
\newcolumntype{T}{>{\centering\arraybackslash} m{.30\linewidth} }
\title{Safe Policies for Reinforcement Learning via Primal-Dual Methods}
\begin{document}

\maketitle

\begin{abstract}
  In this paper, we study the design of controllers in the context of stochastic optimal control under the assumption that the model of the system is not available. This is, we aim to control a Markov Decision Process (MDP) of which we do not know the transition probabilities, but we have access to sample trajectories through experience. We define safety as the agent remaining in a desired safe set with high probability during the operation time. The drawbacks of this formulation are twofold. The problem is non-convex and computing the gradients of the constraints with respect to the policies is prohibitive. Hence, we propose an ergodic relaxation of the constraints with the following advantages. (i) The safety guarantees are maintained in the case of episodic tasks and they hold until a given time horizon for continuing tasks. (ii) The constrained optimization problem despite its non-convexity has arbitrarily small duality gap if the parametrization of the controller is rich enough. (iii) The gradients of the Lagrangian associated to the safe learning problem can be computed using standard Reinforcement Learning (RL) results and stochastic approximation tools. Leveraging these advantages, we exploit primal-dual algorithms to find policies that are safe and optimal. We test the proposed approach in a navigation task in a continuous domain. The numerical results show that our algorithm is capable of dynamically adapting the policy to the environment and the required safety levels.
\end{abstract}



\section{Introduction}\label{sec_intro}

Safety is a fundamental feature in the control design for physical systems. Indeed, controllers for power systems are designed to avoid voltage instabilities~\cite{van2007voltage} which result in unsafe operation. Likewise, in the case of robotics, collision avoidance~\cite{khatib1985real} 
 needs to be guaranteed for their correct operation and to guarantee the integrity of people and property that may surround the robots.

\subsection{Safe reinforcement learning}\label{sec_intro_a}

In this work we consider the Reinforcement Learning (RL) framework understood as the problem of controlling an unknown Markov Decision Process~(MDP)~\cite{howard1964dynamic}.  When the transitions probabilities of the MDP are available, optimal control laws---or \emph{policies}---can be obtained for these processes using dynamic programming~\cite{bertsekas1996neuro}. In contrast, when the underlying MDP is unknown or the system is too complex, the policy needs to be learned from samples of the system. Typically, this is done by assigning an instantaneous \emph{reward} to the system actions that describes the task to be learned. These rewards can be aggregated over a trajectory to determine cumulative rewards. Since the instantaneous rewards depend on the state and on the actions selected, cumulative rewards are a measure of the quality of the decision-making policy of the agent. The objective of the agent is therefore to find a policy that maximizes the expectation of the cumulative rewards, which is known as the \emph{value function}~(or the \emph{Q-function}) of the MDP~\cite{sutton2018reinforcement}. Solutions to these problems can be roughly divided among those that learn the Q-function~\cite{watkins1992q}, and those that attempt to run gradient ascent in the space of policies~\cite{williams1992simple,sutton2000policy}.


A notable drawback of these methods is that they are not always suitable for dangerous or risky tasks~\cite{heger1994consideration, mihatsch2002risk, geibel2005risk}. 
To promote safe behaviors, it is common to include explicit chance constraints~\cite{geibel2006reinforcement, kadota2006discounted, delage2010percentile, di2012policy, hutter2002self, chow2017risk}. These constraints are probabilistic in nature, in the sense that they mandate certain requirements to hold with some given minimum probability. These requirements can involve, for instance, lower bounds on the value function or additional value functions~\cite{kadota2006discounted, geibel2006reinforcement, delage2010percentile}, or arbitrary functions of the state-action space~\cite{chow2017risk, achiam2017constrained}. 
  {The main limitation of these approaches is that they are not guaranteed to be safe in control theoretic terms. For instance, they cannot guarantee positive invariance under closed loop which is a common property in controls~\cite{wieland2007constructive,koditschek1990robot,mayne2000constrained}. }

In this work, we formulate safety constraints by imposing a desired lower bound on the probability of the trajectory to be contained in a safe set, akin to a probabilistic version of positive invariance (see e.g., \cite{khalil2002nonlinear}). This notion of safety is stronger than that in \cite{paternain2019learning} where the probability of remaining in the safe set is for each state of the trajectory. The challenge in RL is that these probabilities cannot be computed since the transitions are assumed unknown. Thus, they can only be estimated through experiments. 
Yet, there is no expression that allows for an update on the policy that takes these probabilities into account. Following the ideas in \cite{paternain2019learning} we propose a relaxation of the definition of safety~(Section~\ref{sec_safe_learning}) and provide guarantees on the ergodic safety of policies learned using the relaxed formulations~(Section~\ref{sec_safety_guarantees}). Namely, we show that these relaxations preserve the safety guarantees.

The main advantage of these relaxations is that they yield an expression that allows the computation of stochastic gradients of the constraints with respect to the policy. Thus, allowing the computation of risk aware (regularized) solutions of the constrained problem. This is a common approach to constrained optimization that has been applied in RL~\cite{howard1972risk, geibel2005risk}. However, the relationship between  regularization coefficient and constraint satisfaction is not straightforward. Indeed, the selection of these hyperparameters, in general, results in a time-consuming process that is domain dependent and often requires domain knowledge~\cite{tessler2018reward,leike2017ai,mania2018simple}. To overcome this limitation, we propose to find the set of parameters in a principled manner by solving the dual problem associated to the constrained problem. Yet, the lack of convexity of the problem prevents us, in principle, to claim any optimality in the selection of such coefficients. In general, the dual optimum is only an upper bound on the primal problem (weak duality). It is well known that if the primal problem is convex and Slater constraint qualifications hold, then the solution of the dual problem is also a lower bound on the primal problem (see e.g.~\cite[Chapter 5]{boyd2004convex}). This is known as strong duality. In this work, we show that despite the non-convexity of the primal problem, strong duality holds (Section~\ref{sec_zdg}). {Strong duality of a class of constrained reinforcement learning problems has been established in \cite{cPaternainEtal2019d}. In this work, we provide a new proof based on geometric arguments instead of the analysis of perturbations of mathematical programs.}


 Other than concluding remarks, the paper finishes with numerical experiments in which we show that primal-dual methods (described in Section \ref{sec_pd}) can automatically adjust the trade-off between goal and safety~(Section~\ref{num_results}). Before formalizing the positively invariant reinforcement learning problem, in Section \ref{sec_safe_control} we relate our work to approaches to safe control in the literature.

\subsection{Alternative approaches to safety in control }\label{sec_safe_control}
Several control techniques exist to guarantee safety in terms of positive invariant sets. Some include control barrier functions (CBFs) \cite{wieland2007constructive}, artificial potentials \cite{koditschek1990robot}, or explicit state and input constraints for instance in the context of Model Predictive Control (MPC) \cite{mayne2000constrained}. These in general rely on the knowledge of the dynamical system under consideration. Indeed, CBFs define an admissible control space by ensuring that the Lie derivative of the CBFs remains positive in the safe set. In the case of artificial potentials, in general kinematic systems are considered. This is equivalent to having available a low-level controller that takes the dynamics of the system into account. When systems with dynamics are considered, the input of the system needs to include a dissipative term \cite{koditschek1991control} that depends on the dynamics.  In the case of MPC, the inputs are designed by solving an optimization problem that depends explicitly on the model of the system. If the dynamics are not known, one can attempt to use system identification~ \cite{deistler1995consistency,tsiamis2019finite,dean2019sample}, however most of the guarantees in terms of consistency are only provided in the case of linear systems.



 The shortcomings of these methods in the case of unknown stochastic non-linear systems makes the study of RL solutions relevant. We discuss the main strategies for safe RL other than the risk-aware rewards already mentioned in Section \ref{sec_intro_a}. A comprehensive review of this topic can be found in~\cite{garcia2015comprehensive}. One approach consists in formulating robust problems in which the policy is optimized over its worst-case return~\cite{heger1994consideration, coraluppi1999risk}. However, these techniques generally yield policies too conservative for the average scenario. 
%
%
The second approach addresses this issue by modifying the learning procedure instead of the reward. By performing safe exploration~\cite{moldovan2012safe,turchetta2016safe,koller2018learning}, the agent learns from safe trajectories and is therefore biased to learn safe policies. These methods require, not only the existence of a safe policy, but it also needs to be available to the agent.


\section{Problem Formulation}\label{sec_problem_formulation}
In this work we consider a stochastic optimal control problem under safety constraints and we assume that the model is unknown. Formally, let~$\ccalS$ and~$\ccalA$ be compact sets describing the states and actions of the agent, respectively. Denote by $t\in\mathbb{N}$ the time index. We consider a random controller\textemdash or \emph{policy}\textemdash defined as a conditional distribution~$\pi_\theta(a|s)$ from which the agent draws actions~$a \in \ccalA$ when in state~$s \in \ccalS$. We restrict our attention to controllers parametrized by a vector~$\theta \in \mathbb{R}^d$. The action selected by the agent drives it to another state according to the transition dynamics of the system defined by the conditional probability~$\mathbb{P}_{s_t\to s_{t+1}}^{a_t}(\tilde{\ccalS}) := \mathbb{P}(s_{t+1}\in \tilde{\ccalS} \mid s_t,a_t)$, where $\tilde{\ccalS}\subset \ccalS$,~$t \in \mathbb{N}$, $s_t \in \ccalS$, and~$a_t \in \ccalA$. This process is assumed to satisfy the Markov property~$\mathbb{P}(s_{t+1} \in \tilde{\ccalS} \mid (s_u,a_u),\, \forall u \leq t) = \mathbb{P}(s_{t+1} \in \tilde{\ccalS} \mid s_t,a_t)$, hence it is termed an MDP.  {The initial state is randomized according to a probability distribution $\mathbb{P}(s_0\in\tilde{\ccalS})$}. In addition, the action selected provides a reward to the agent from the function~$r:\ccalS \times \ccalA \to \mathbb{R}$ that informs the agent of the quality of the decision.

 The goal of the agent is to find a parametrization~$\theta$ of the policy that maximizes the value function of the MDP. This is defined as the expected value of the discounted cumulative rewards obtained along a trajectory
\begin{equation}\label{eqn_value_function_discounted}
	V(\theta) = \mathbb{E}_{\bba\sim\pi_{\theta}(\bba|\bbs)}
		\left[\sum_{t=0}^\infty \gamma^t r(s_t,a_t)\right],
\end{equation}
where~$\bbs$ and $\bba$ represent the trajectory of the system and~$\gamma \in (0,1]$ is the discount factor. The parameter $\gamma$ defines how myopic the agent is. Indeed, the smaller~$\gamma$ is, the faster the geometric series vanishes which implies that less importance is placed in future rewards relative to an agent that selects $\gamma$ closer to $1$.  In the limit, where $\gamma=1$ all rewards have the same importance. However, in this case we assume that all rewards after a finite horizon time $T$ are zero. {Although the finite horizon and the infinite horizon capture different operation principles it is possible to show that the latter reduces to the former when the horizon is selected randomly~(see Remark~\ref{remark_equivalence}).}

As we argued in Section~\ref{sec_intro}, maximizing $V(\theta)$ in~\eqref{eqn_value_function_discounted} may lead to unsafe or risky policies. To formalize the notion of safety, let~$\ccalS_{0} \subset \ccalS$ denote a set of states in which the agent is required to remain. Then, we define safety as a probabilistic extension of control invariant sets \cite{khalil2002nonlinear}.
\begin{definition}\label{def_safe_policy}
	We say a policy~$\pi_\theta$ is $(1-\delta)$-safe for the set~$\ccalS_0 \subset \ccalS$ if~$\mathbb{P}\left( \bigcap_{t \geq 0} \{s_{t} \in \ccalS_0\} \mid \pi_\theta\right) \geq 1-\delta$.
\end{definition}
In other words, a policy is safe if the trajectories it generates remain within the safe set~$\ccalS_0$ with high probability. Note that this is a stricter version of safety than the one used in~\cite{paternain2019learning} where each state of the trajectory was considered separately, i.e., where a policy was considered safe if~$ \mathbb{P}( s_{t} \in \ccalS_0 \mid \pi_\theta) \geq 1-\delta $ for all $t\geq  0$. 

In general, different requirements impose different safety guarantees, we consider a collection of $m$ different subsets~$\ccalS_1, \ldots, \ccalS_m \subset \ccalS$ of the state-space and we can summarize the problem of interest as that of solving the following constrained optimal control problem.
\begin{equation}\label{eqn_optimization_problem}
\begin{aligned}
	\underset{\theta \in \mathbb{R}^d}{\max} & &&V(\theta)
	\\
	\subjectto &
		&&\mathbb{P}\left( \bigcap_{t = 0}^{\infty} \{ s_{t} \in \ccalS_i\} \,\bigg\vert\, \pi_\theta \right) \geq 1-\delta_i
		\text{, } i = 1, \ldots, m
		\text{.}
\end{aligned}
\end{equation}
Since we assume that the transition probabilities of the MDP are not available to the agent, the probability~$\mathbb{P}(s_{t} \in \ccalS_i)$ can only be evaluated by experience which prevents us from establishing a relation between~$\theta$~(i.e., the policy) and the constraint in~\eqref{eqn_optimization_problem}. 

A common alternative to overcome this issue is to modify the reward function in~\eqref{eqn_value_function_discounted} to make it risk-aware~\cite{howard1972risk, geibel2005risk}, i.e., to use a reward of the form
\begin{equation}\label{eqn_risk_aware_reward}
	r_{w}(s_t,a_t) = {r}(s_t,a_t) + \sum_{i=1}^m w_i \mathbbm{1}(s_t \in \ccalS_i)
		\text{,}
\end{equation}
where~${r}$ is the original reward function describing the agent task, $w_i > 0$ are safety-related rewards, and the indicator function is such that~$\mathbbm{1}(s_t \in \ccalS_i) = 1$ if~$s_t \in \ccalS_i$ and zero otherwise. In other words, the agent receives an extra reward of~$w_i$ for respecting the $i$-th safety specifications. Since this approach amounts simply to modifying the reward function, common RL techniques used to maximize~$V(\theta)$ apply to this setting~\cite{sutton2018reinforcement}. Nevertheless, selecting parameters~$w_i$ that lead to $(1-\delta)$-safe policies is challenging given that there is no straightforward relation between the~$w_i$ and the probabilities~$\mathbb{P}(s_t \in \ccalS_i)$. Moreover, not only do their values depend on~${r}$, but they must also strike a balance between safety and task completion: large values of~$w_i$ can lead to policies that are safe \emph{because} they do not achieve the goal \cite{paternain2019learning}.

In the sequel, we leverage duality and probabilistic inequalities to put forward a relaxation of~\eqref{eqn_optimization_problem} that leads to guaranteed~$(1-\delta)$-safe policies. The advantages of the proposed method are twofold. The relaxation is such that it results in an expression similar to~\eqref{eqn_risk_aware_reward}, allowing commonly used RL algorithms to be applied  directly. Moreover, it provides a systematic way of adapting the coefficients~$w_i$ to obtain safe policies. It is worth pointing out that the relaxation and the aforementioned advantages hold for any stochastic control problem of the form \eqref{eqn_optimization_problem} regardless of whether the model is available or not. 

Before proceeding, we present pertinent remarks regarding the process of designing the controller and the relationship between episodic and discounted continuing tasks.
{
\begin{remark}
The problem of interest is that of designing a controller that is safe and we do not guarantee that during the training process the constraints in~\eqref{eqn_optimization_problem} are satisfied. This is a problem of learning \emph{safety}, as compared to learning \emph{safely} (safe exploration), which we do not consider.
\end{remark}
}
\begin{remark}\label{remark_equivalence}
  In this remark we discuss the relationship between the undiscounted ($\gamma=1)$ finite horizon problem and the discounted $\gamma<1$ with infinite horizon. {In particular, we discuss that the infinite
  	horizon problem can be reduced to a finite horizon problem with a
  	random stopping time.} This discussion is inspired in \cite[Section 2.3]{bertsekas1996neuro} and in the proofs of \cite[Proposition 2 and 3]{paternain2018stochastic}. Let us start by considering the finite horizon value function with a horizon chosen from a geometric distribution with parameter $\gamma\in(0,1)$. Then, one has that 
  \begin{equation}
    \E{\sum_{t=0}^Tr(s_t,a_t)} = \E{\sum_{t=0}^\infty \mathbbm{1}(t\leq T)r(s_t,a_t)}.
    \end{equation}
  Under mild assumptions on the reward function it is possible to exchange the sum and the expectation (see e.g., \cite[Proposition 2]{paternain2018stochastic} ). Also, assuming that the horizon is drawn independently from the trajectory, we can write $\E{\mathbbm{1}(t\leq T)r(s_t,a_t)} = \E{\mathbbm{1}(t\leq T)}\E{r(s_t,a_t)}$. This yields
    \begin{equation}\label{eqn_remark}
    \E{\sum_{t=0}^Tr(s_t,a_t)} = \sum_{t=0}^\infty \E{ \mathbbm{1}(t\leq T)}\E{r(s_t,a_t)}.
    \end{equation}
    Further notice that the expectation of the indicator function is the probability of $t$ being less than $T$. Since $T$ is drawn from a geometric distribution it follows that $\E{ \mathbbm{1}(t\leq T)} = \gamma^t(1-\gamma)$. Thus, \eqref{eqn_remark} reduces to
    \begin{equation}
    \E{\sum_{t=0}^Tr(s_t,a_t)} = (1-\gamma)\sum_{t=0}^\infty \gamma^t\E{r(s_t,a_t)}.
        \end{equation}
{Exchanging back the expectation and the sum establishes the that objective function of the infinite horizon problem is the same as that of a finite horizon problem with a random stopping time.}
        %
    %
  \end{remark}


\section{Learning Safe Policies}\label{sec_safe_learning}

If the transition probabilities of the system were known, \eqref{eqn_optimization_problem} could be solved by directly imposing constraints on the probabilities. For instance, using MPC~\cite{mayne2000constrained}. However, this is not the scenario in RL problems, where the transition probabilities can only be evaluated through experience. Hence, although the safety probabilities in~\eqref{eqn_optimization_problem} can be estimated, there is no straightforward way to optimize~$\pi_\theta$ with respect to them. To overcome this difficulty, we relax the chance constraints in~\eqref{eqn_optimization_problem} in the form of the cumulative costs in~\eqref{eqn_value_function_discounted}. Explicitly, define
\begin{align}
	U_{i}(\theta) &= \sum_{t=0}^\infty \gamma^t \mathbb{P}(s_t \in \ccalS_i|\pi_\theta)
		\label{eqn_discounted_prob}
		\text{.}
\end{align}

Notice that in the case of $\gamma=1$ and finite horizon the relaxation proposed is equivalent to satisfying the chance constraints in average. This relates to the idea of online learning~\cite{vapnik2000nature,paternain2016online,chen2017online}. 
With this definition, we can formulate the following optimal control problem 
\begin{equation}\label{P:parametric}
  \begin{aligned}
	P^\star_\theta\triangleq\max_{\theta\in\mathbb{R}^d}& &&V(\theta)\triangleq	\mathbb{E} \left[ \sum_{t = 0}^\infty \gamma^t r(s_t,a_t) \mid \pi_\theta \right]
	\\
	\subjectto& &&U_i(\theta)  \geq c_i, \, \mbox{for all}\, i=1,\ldots m .
  \end{aligned}
  \end{equation}
Although $(1-\delta_i)$-safe policies guarantee that~$U_{i}(\theta) > \alpha(1-\delta_i)$, with $\alpha>0$, these are not sufficient conditions to achieve safety as defined in Definition~\ref{def_safe_policy}. Hence, to guarantee the desired levels of safety we need to define appropriate values~$c_i$ in \eqref{P:parametric}. In Section \ref{sec_safety_guarantees} we provide theoretical guarantees on these values to guarantee $(1-\delta_i)$-safety. {These values depend exclusively on the desired level of safety and a time horizon until which safety can be guaranteed and can therefore readily be computed. Hence, they are not hyperparameters.}

Notice that in general Problem \eqref{P:parametric} is a non-convex constrained optimization problem (see e.g.~\cite{ding2020natural}).  Given the challenges that solving constrained non-convex problems entail, a common workaround to compute approximate solutions to \eqref{P:parametric} is to solve its dual relaxation. To that end define the Lagrangian associated with \eqref{P:parametric} 
\begin{equation}\label{eqn_lagrangian}
\ccalL(\theta,\lambda)\triangleq  V(\theta) +\sum_{i=1}^m\lambda_i\left(U_i(\theta)-c_i\right), 
  \end{equation}
where $\lambda\in\mathbb{R}^m_+$ are the multipliers associated to the constraints. Then, the dual function is defined as 
\begin{equation}\label{eqn_parametric_dual_function}
d_\theta(\lambda)\triangleq \max_{\theta\in\mathbb{R}^d}\ccalL(\theta,\lambda).
  \end{equation}
The dual function provides an upper bound on the problem \eqref{P:parametric} for any $\lambda$, i.e., $P_\theta^\star\leq d_\theta(\lambda)$ \cite[Chapter 5]{boyd2004convex}. Hence in general, one is interested in finding the $\lambda$ that provides the tightest of the upper bounds. This defines the dual problem 
  \begin{equation}\label{P:dual_parametric}
  \begin{aligned}
	D_\theta^\star \triangleq\min_{{\lambda} \in \mathbb{R}^m_+}  d_\theta(\lambda). 
  \end{aligned}
  \end{equation}

    Notice that the dual function is the point-wise maximum of linear functions, and it is therefore convex \cite[Section 3.2.3]{boyd2004convex} regardless of the structure of problem \eqref{P:parametric}. Hence, the dual problem can be solved efficiently using gradient-like algorithms. We will exploit this fact in Section~\ref{sec_pd}. 

%


In what follows we compare the approach proposed here, i.e., solving \eqref{P:dual_parametric} with its regularized counterpart \eqref{eqn_risk_aware_reward}. To make this comparison explicit, observe that by definition of expectation we can write the probabilities $\mathbb{P}(s_t\in\ccalS_i\mid \pi_\theta) = \mathbb{E}\left[\mathbbm{1}(s_t\in \ccalS_i)\mid \pi_\theta\right]$. Hence, we write the constraints of problem \eqref{P:parametric} as 
\begin{equation}
U_{i}(\theta)  =  \sum_{t=0}^\infty \gamma^t \mathbb{P}\left(s_t\in\ccalS_i\mid \pi_\theta\right)=\mathbb{E}\left[\sum_{t=0}^\infty\gamma^t\mathbbm{1}\left(s_t\in\ccalS_i\right)\big| \pi_\theta\right],
\end{equation}
where the sum and the expectation can be exchanged using the Monotone Convergence Theorem (see e.g.,\cite{durrett2010probability}). 

With this observation the problem of computing the dual function (cf., \eqref{eqn_parametric_dual_function}), i.e., maximizing the Lagrangian for a given $\lambda$ is equivalent to solving the unconstrained optimal control problem with the following reward
\begin{align}\label{reward_constrained}
  r_\lambda(s,a) =r(s,a) + \sum_{i=1}^m\lambda_i(\mathbbm{1}(s\in\ccalS_i) -c_i).
  \end{align}
Indeed, with this definition it follows that the Lagrangian is equivalent to 
\begin{equation}\label{eqn_parametric_lagrangian}
\ccalL(\theta,\lambda) = \mathbb{E}\left[\sum_{t=0}^\infty\gamma^t r_\lambda(s,a)\mid \pi_\theta\right]. 
  \end{equation}
In the case of model free approaches Problem \eqref{eqn_parametric_lagrangian} can therefore be solved with classic RL algorithms such as policy gradient \cite{williams1992simple,sutton2000policy,paternain2018stochastic} or actor-critic methods \cite{konda2000actor}. If the transition probabilities where known the same ideas apply as well, where the maximization can be done using dynamic programming techniques \cite{bertsekas1995dynamic}. 

Regardless of the method selected to maximize the Lagrangian and whether the model is known, formulations of the form \eqref{P:dual_parametric} have an advantage over considering the risk aware formulation~\eqref{eqn_risk_aware_reward}. The computational overhead of selecting the parameter $\lambda$ is that of solving a convex optimization problem which is more efficient than exhaustive grid search or bisection for example.

Another advantage of solving \eqref{P:dual_parametric} is that for this class of problems we can establish that the duality gap $P_\theta^\star-D_\theta^\star$ can be made arbitrarily small (cf., Theorem \ref{T:parametrization}). The fact that under mild conditions the duality gap for convex problems is zero is a well-known fact (see e.g., \cite[Chapter 5]{boyd2004convex}). However, despite problem \eqref{P:parametric} not being convex we can provide similar guarantees.

With these advantages in mind, in the next section we establish the values of the slacks $c_i$ that are required to guarantee safety of the controllers designed solving \eqref{P:parametric}. As previously mentioned, the slacks depend only on the desired safety level $\delta_i$ and on the horizon of safe operation.

\section{Safety Guarantees}\label{sec_safety_guarantees}
In this section we establish the values of the parameters $c_i$ in problem~\eqref{P:parametric} that guarantee that any solution of the optimization problem is safe (cf., Definition \ref{def_safe_policy}). To do so, we rely on the following technical lemma.

\begin{lemma}\label{T:prob_inequality}
Consider a non-increasing sequence of non-negative elements~$\mu_t \geq 0$ such that $0< \sum_{t=0}^\infty \mu_t <\infty$ and events~$\ccalE_t$ for~$t \geq 0$. Then, for any~$\delta>0$,~$\mu^\prime>0$ and~$T$ such that~$\mu^\prime \leq \mu_T$ it holds that
\begin{equation}\label{eqn_prob_inequality}
  \sum_{t = 0}^{\infty} \mu_t \mathbb{P}(\ccalE_t) \geq \sum_{t = 0}^{\infty} \mu_t - \mu^\prime \delta
		\Rightarrow \mathbb{P}\left( \bigcap_{t = 0}^{T} \ccalE_t \right) \geq 1 - \delta
		\text{.}
\end{equation}
\end{lemma}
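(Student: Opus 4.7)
The plan is to reformulate the hypothesis as a bound on the failure probabilities and then combine the monotonicity of the weights $\mu_t$ with a union bound on $\bigcup_{t=0}^{k} \ccalE_t^c$.

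First, I would rewrite the premise in~\eqref{eqn_prob_inequality} by moving $\sum_t \mu_t$ to the left-hand side, which yields
\begin{equation*}
    \sum_{t=0}^{T-1} \mu_t \bigl(1 - P(\ccalE_t)\bigr) \;\leq\; \mu^\prime \delta.
\end{equation*}
Each summand is non-negative, so truncating the sum at index $k \le T-1$ can only decrease it. Thus
\begin{equation*}
    \sum_{t=0}^{k} \mu_t \bigl(1 - P(\ccalE_t)\bigr) \;\leq\; \mu^\prime \delta.
\end{equation*}

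Next I would exploit the monotonicity hypothesis. Since $\mu_t$ is non-increasing and $\mu^\prime \leq \mu_k$, we have $\mu_t \geq \mu_k \geq \mu^\prime$ for every $t \in \{0, \dots, k\}$. Combining this with the previous display gives
\begin{equation*}
    \mu^\prime \sum_{t=0}^{k} \bigl(1 - P(\ccalE_t)\bigr)
    \;\leq\; \sum_{t=0}^{k} \mu_t \bigl(1 - P(\ccalE_t)\bigr)
    \;\leq\; \mu^\prime \delta,
\end{equation*}
and dividing through by $\mu^\prime > 0$ yields $\sum_{t=0}^{k} \bigl(1 - P(\ccalE_t)\bigr) \le \delta$.

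Finally, applying the union bound to the complementary events delivers the conclusion:
\begin{equation*}
    P\!\left(\bigcap_{t=0}^{k} \ccalE_t\right)
    = 1 - P\!\left(\bigcup_{t=0}^{k} \ccalE_t^c\right)
    \;\geq\; 1 - \sum_{t=0}^{k} \bigl(1 - P(\ccalE_t)\bigr)
    \;\geq\; 1 - \delta.
\end{equation*}

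There is no real technical obstacle here; the only subtlety is recognising that the parameter $k$ in the statement is precisely the largest index for which the monotonicity of $\mu_t$ still allows us to dominate the weights from below by $\mu^\prime$. That is what lets us pass from a weighted statement on $[0, T-1]$ to a uniform statement on the prefix $[0, k]$, after which the union bound does the rest.
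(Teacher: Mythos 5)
Your proof is correct and follows essentially the same route as the paper's: rewrite the premise in terms of the complementary probabilities, truncate the sum at $k$, use the monotonicity of $\mu_t$ together with $\mu^\prime \leq \mu_k$ to strip off the weights, and finish with the union bound (the Boole--Fr\'echet--Bonferroni inequality cited in the paper). The only cosmetic difference is that you bound $\mu_t \geq \mu^\prime$ directly rather than passing through $\mu_k$ as an intermediate step.
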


\begin{proof}
See Appendix \ref{proof_lemma_safety}.
\end{proof}
Using Lemma~\ref{T:prob_inequality}, we can obtain slacks~$c_i$ such that all feasible policies of~\eqref{P:parametric} are~$(1-\delta_i)$-safe. We state first the result with $\gamma =1$ and finite horizon. 
%
%
%
\begin{theorem}\label{thm_epsilon_finite}
{For the problem~\eqref{P:parametric} with~$\gamma=1$, finite horizon~$T$ and~$c_i = (T+1)\left(1-\delta_i/(T+1)\right)$ it follows that
\begin{enumerate}[(i)]
	\item If it is feasible, then its solution is~$(1-\delta_i)$-safe for the sets~$\ccalS_i$ in the sense of Definition \ref{def_safe_policy}.
	\item Further suppose that there exist parameters~$\tilde{\theta}$ such that the policy~$\pi_{\tilde{\theta}}$ is~$(1-\delta_i/(T+1))$-safe for the sets~$\ccalS_i$, with~$i = 1,\ldots,m$, i.e., \eqref{eqn_optimization_problem} is feasible with slack $1-\delta_i/(T+1)$. Then the feasibility of \eqref{P:parametric} is guaranteed. 
\end{enumerate}
}
\end{theorem}
\begin{proof}
Let $\theta^\dagger$ be a solution to \eqref{P:parametric} with $\gamma=1$, finite $T$ and $c_i = (T+1)(1-\delta_i(T+1))$, and denote by $\pi_{\theta^\dagger}$ the policy corresponding to parameters $\theta^\dagger$. Further define the event~$\ccalE_t = \{s_t \in \ccalS_i\}$ and take~$\mu_t=\mu^\prime = 1/(T+1)$ for all~$t = 0,\dots,T$ and $\mu_t=0$ for $t>T$. Since $\theta^\dagger  $ is a solution to \eqref{P:parametric} it follows that 
\begin{equation}
\sum_{t=0}^\infty\mu_t\mathbb{P}(\ccalE_t) \geq \sum_{t=0}^\infty \mu_t -\mu^\prime\delta_i. 
\end{equation}
Thus Lemma \ref{T:prob_inequality} guarantees that 
\begin{equation}
  \mathbb{P}\left( \bigcap_{t = 0}^{T} \{s_t \in \ccalS_i\} \mid \pi_{\theta^\dagger}\right) \geq 1 - \delta_i
		\text{.}
  \end{equation}
To prove the second claim assume that there exists a~$(1-\delta_i/(T+1))$-safe policy for the sets~$\ccalS_i$. Since $\mathbb{P}\left( \bigcap_{t = 0}^{T} \{s_t \in \ccalS_i\} \mid \pi_{\tilde{\theta}}\right) \leq \min_{t} \mathbb{P}(s_t\in \ccalS_i\mid \pi_{\tilde{\theta}}) $, it follows that 
  \begin{equation}
\mathbb{P}\left( \bigcap_{t = 0}^{T} \{s_t \in \ccalS_i\} \mid \pi_{\tilde{\theta}}\right) \leq \frac{1}{T+1}\sum_{t=0}^T \mathbb{P}(s_t\in\ccalS_i\mid \pi_{\tilde{\theta}}).
    \end{equation}
Notice that a $(1-\delta_i/(T+1))$-safe policy ensures that the left hand side of the above equation is bounded below by $(1-\delta_i/(T+1))$. Thus for said policy it follows that 
\begin{equation}
  \frac{1}{T+1}\sum_{t=0}^T \mathbb{P}(s_t\in\ccalS_i\mid \pi_{\tilde{\theta}})\geq 1-\frac{\delta_i}{T+1}.
\end{equation}
Thus, \eqref{P:parametric} is feasible with~$c_i = (T+1)\bigl(1-\delta_i/(T+1)\bigr)$.
\end{proof}

{Theorem~\ref{thm_epsilon_finite} makes two claims. The first, of practical nature, guarantees that a~$(1-\delta_i)$-safe policy can be computed in the finite horizon case by solving~\eqref{P:parametric} with slack variables~$c_i = (T+1)\left(1-\delta_i/(T+1)\right)$. Note that these values can be directly computed using only the desired safety level $\delta_i$ and the operation horizon $T$. Thus, the result does not add any extra hyperparameters. The second claim is of theoretical nature. Indeed, it provides a sufficient condition to guarantee the feasibility of problem~\eqref{P:parametric}. This condition is based on the existence of a safer policy. Verifying the existence of the latter is as hard as solving the original problem \eqref{eqn_optimization_problem}. Hence, the claim is only useful if by experience we know that such a safe policy exists (despite not being available).}

In what follows, we derive an analogous result for the discounted problem, i.e., with $\gamma<1$.
%
%

\begin{theorem}\label{thm_epsilon_discounted}

{For the problem~\eqref{P:parametric} with~$c_i = (1-\delta_i \gamma^{T_i}(1-\gamma))/(1-\gamma)$ it follows that
\begin{enumerate}[(i)]
\item If it is feasible, then its solution is~$(1-\delta_i)$-safe for the sets~$\ccalS_i$ in the sense of Definition \ref{def_safe_policy}, until time $T_i$.
\item Further suppose that there exist parameters~$\tilde{\theta}$ such that the policy~$\pi_{\tilde{\theta}}$ is~$(1-\gamma^{T_i}(1-\gamma)\delta_i)$-safe for the sets~$\ccalS_i$, with~$i = 1,\ldots,m$, i.e., \eqref{eqn_optimization_problem} is feasible with slack $1-\gamma^{T_i}(1-\gamma)\delta_i$. Then the feasibility of \eqref{P:parametric} is guaranteed. 
\end{enumerate}
}
\end{theorem}
\begin{proof}
  We proceed as in the proof of Theorem~\ref{thm_epsilon_finite}. Denote by $\pi_{\theta^\dagger}$ a feasible policy for the problem~\eqref{P:parametric} with~$c_i = (1-\delta_i[1-\gamma^{T_i}(1-\gamma)])/(1-\gamma)$. From the constraint in \eqref{P:parametric} and applying Lemma~\ref{T:prob_inequality} with let~$\ccalE_t = \{s_t \in \ccalS_i\}$, $\mu_t = \gamma^t$ for~$t \geq 0$, and~$\mu^\prime = \gamma^{T_i}$ one has the following implication
\begin{multline}
	U_{i}(\theta^\dagger) = \sum_{t = 0}^{\infty} \gamma^t \mathbb{P}(s_t \in \ccalS_i\mid \pi_{\theta^\dagger})
		\geq \frac{1}{1-\gamma} - \gamma^{T_i}\delta_i
	\\
		\Rightarrow \mathbb{P}\left( \bigcap_{t = 0}^{T_i} \{s_t \in \ccalS_i\} \mid \pi_{{\theta}^\dagger}\right) \geq 1 - \delta_i
		\text{.}
\end{multline}

To establish the second claim, let~$\pi_{\tilde{\theta}}$ be~$(1-\gamma^{T_i}(1-\gamma)\delta_i)$-safe for the sets~$\ccalS_i$, then that policy is such that
  \begin{align}
\min_t \mathbb{P}(s_t \in \ccalS_i\mid \pi_{\tilde{\theta}}) &\geq \mathbb{P}\left( \bigcap_{t = 0}^{\infty} \{s_t \in \ccalS_i\} \mid \pi_{\tilde{\theta}}\right) \nonumber\\
&\geq    1 - \gamma^{T_i}(1-\gamma)\delta_i.
  \end{align}
  Hence it follows that,
\begin{align}
	U_{i}(\tilde{\theta}) = \sum_{t = 0}^\infty \gamma^t \mathbb{P}(s_t \in \ccalS_i\mid \pi_{\tilde{\theta}}) \nonumber\\
		\geq \bigl(1 - \gamma^{T_i}(1-\gamma)\delta_i\bigr) \sum_{t = 0}^\infty \gamma^t
	{}&= \frac{1 - \gamma^{T_i}(1-\gamma)\delta_i}{1-\gamma}
		\text{.}
\end{align}
This shows that the problem is feasible, therefore completing the proof of the result.
\end{proof}

Contrary to the finite horizon problem in Theorem~\ref{thm_epsilon_finite}, Theorem~\ref{thm_epsilon_discounted} does not guarantee that policies obtained from~\eqref{P:parametric} are safe for all~$t \geq 0$, but that they are safe for an arbitrarily long window~$t \leq T_0=\min\{T_1,\ldots,T_m\}$, where $T_1,\ldots,T_m$ are the horizons assumed to exist in Theorem~\ref{thm_epsilon_discounted}. Naturally, arbitrarily safe policies must exist for this to hold and Theorem~\ref{thm_epsilon_discounted} quantifies the trade-off between safety level and safety horizon in terms of the discount factor~$\gamma$. {Notice that this limitation is intrinsic to the infinite horizon problem and not necessarily to the approach here presented. Without further assumptions on the system, to guarantee the probabilistic positive invariance (Definition \ref{def_safe_policy}) one needs to guarantee that infinitely often we have $\mathbb{P}(s_t\in\ccalS_0)=1$. This condition is in general hard to satisfy. Consider for instance the following MDP: a linear systems with Gaussian noise. Notice that even for stable controllers, the probability of being arbitrarily far from the origin is not zero due to the noise. Hence, the condition that $\mathbb{P}(S_t\in\ccalS_0)=1 $ cannot be satisfied for any $t$.} 

Theorems~\ref{thm_epsilon_finite} and~\ref{thm_epsilon_discounted} establish that the safety level of the policies can be preserved as long as safe enough policies exist. Moreover, the levels of safety required and the parameters that define problem \eqref{P:parametric} can be computed based on the desired safety level $\delta_i$ and the desired safety horizon $T_i$. In that sense, we have not lost much by relaxing the chance constraints to be satisfied on~(weighted) average.

{It is fair to point out that for some MDPs the relaxation may result in an optimality gap between problems \eqref{eqn_optimization_problem} and \eqref{P:parametric}.  Indeed, it depends on how the feasible set is reduced. However, without further assumptions it may be impossible to establish a bound. For instance, if the MDP is such that the state transitions are not correlated then there is no modification in the feasible set and thus the solutions are the same. If one has more information about the MDP it may be possible to bound the difference.  This analysis is beyond the scope of this work since we are mainly concerned with safe operation.}

At this point we have all the elements to formulate the problem \eqref{P:parametric} so that the solutions of the problem are guaranteed to satisfy the desired safety requirements. Thus, we turn our attention to solving this problem. In Section \ref{sec_safe_learning} we briefly discussed the benefits of solving the dual relaxation \eqref{P:dual_parametric} from the theoretical and practical point of view. In Section \ref{sec_pd} we dive deeper in the algorithmic details, while in the next section we focus on the theoretical guarantees of solving the dual relaxation.

Recall that the dual problem provides the tightest upper bound on the problem of interest \eqref{P:parametric} regardless of the structure of the problem (weak duality). However, given its non-convexity we do not have guarantees on how big this gap is a priori. The next section establishes the almost optimality of the dual relaxation \eqref{P:dual_parametric} by showing that the duality gap can be made arbitrarily small. It is worth pointing out that despite these results being developed in the context of RL, they hold for any stochastic control problem of the form \eqref{P:parametric}.


\section{Safe Learning Has (Almost) Zero Duality Gap}
\label{sec_zdg}

Having established that the relaxation proposed allows a relatively easy computation of the dual problem associated with \eqref{eqn_optimization_problem}, we focus on understanding the advantages of working in the dual domain. First, notice that the dual function (cf., \eqref{eqn_parametric_dual_function}) is a point-wise maximum of linear functions, and therefore it is convex \cite[Section 3.2.3]{boyd2004convex}. Hence, solving the problem in the dual domain is simple since gradient descent on the multipliers finds the optimal dual variable. However, in general, the dual problem only provides an upper bound on the solution of~\eqref{P:parametric}. The difference between the dual optimum~\eqref{P:dual_parametric} and the solutions of~\eqref{P:parametric} is termed the duality gap. Despite~\eqref{P:parametric} being non-convex we show in Section~\ref{sec_params} that the duality gap is arbitrarily small for a sufficiently rich parametrization. This implies that, for the optimal value of $\lambda$, maximizing~\eqref{eqn_parametric_lagrangian} is equivalent to solving~\eqref{P:parametric}. 

In the rest of this section, we focus in showing that the duality gap can be made arbitrarily small for rich enough parameterization (Section \ref{sec_params}), before doing so, we require an auxiliary result regarding the case where policies are stochastic kernels. 
We show that in this case, the safe-learning problem has zero duality gap. This is the subject of Section \ref{sec_non_parametric}. 
\subsection{The case of stochastic kernels}\label{sec_non_parametric}
To be precise, we consider in this section that the agent chooses actions sequentially based on a policy~$\pi \in \ccalP_s(\ccalA)$, where $\ccalP_s(\ccalA)$ is the space of stochastic kernels with source the state space $\ccalS$ and target $\ccalA$. 
Given the relation between the continuous task and the episodic problems discussed in Remark \ref{remark_equivalence}, we will develop the results for the discounted problem. In this case, the safe-learning problem yields
\begin{equation}\label{P:constrainedRL}
  \begin{aligned}
	P^\star \triangleq \max_{\pi \in \ccalP_s(\ccalA)}& \hspace{-2ex}&V(\pi) &\triangleq
		\mathbb{E} \left[ \sum_{t = 0}^\infty \gamma^t r(s_t,a_t) \mid \pi\right]
	\\
	\subjectto \hspace{1ex} & \hspace{-2ex}&U_i(\pi) &\triangleq
		\mathbb{E} \left[ \sum_{t = 0}^\infty \gamma^t \mathbbm{1}(s_t\in\ccalS_i) \mid \pi\right] \geq c_i
		\text{.}
  \end{aligned}
  \end{equation}
%
To be formal let us define the dual problem associated to \eqref{P:constrainedRL}. To do so, let $\lambda\in\mathbb{R}_+^m$ and as done before we define the Lagrangian 
\begin{equation}\label{E:lagrangian}
	\ccalL(\pi,{\lambda}) \triangleq V(\pi) + \sum_{i=1}^m\lambda_i \left(U_i(\pi)-c_i\right)
		\text{.}
\end{equation}
%
%
%
Then, analogous to the parametric case, we define the dual problem as
%
\begin{equation}\label{P:dualRL}
  \begin{aligned}
	D^\star \triangleq \min_{{\lambda} \in \mathbb{R}^m_+}  \max_{\pi\in\ccalP(\ccalS)} \ccalL(\pi,{\lambda}).
  \end{aligned}
  \end{equation}
Note that the dual problem can be related to the unconstrained, risk aware problem~\eqref{eqn_risk_aware_reward} by taking~$\lambda_i = w_i$ in~\eqref{E:lagrangian}. 
Problem~\eqref{P:dualRL} then finds the best regularized problem, i.e., that whose value is closest to~$P^\star$. 
{Despite these similarities,~\eqref{P:dualRL} does not necessarily solve the same problem as~\eqref{P:constrainedRL}. In other words, there need not be a relation between the optimal dual variables~${\lambda}^\star$ from~\eqref{P:dualRL} or the regularization parameters~$w_i$ in \eqref{eqn_risk_aware_reward} and the constraints~$c_i$ of~\eqref{P:constrainedRL}. This depends on the value of the duality gap~$\Delta = D^\star-P^\star$. Indeed, if~$\Delta$ is small, then so is the suboptimality 
  of the policies obtained from~\eqref{P:dualRL}. In the limit case where~$\Delta = 0$, 
  {problems~\eqref{P:constrainedRL} and~\eqref{P:dualRL} 
    would be equivalent. Since~\eqref{P:constrainedRL} is not a convex program, however, this result does not hold immediately. The subject of Theorem \ref{T:strongDuality} is to establish that in fact~\eqref{P:constrainedRL} has zero duality gap under mild conditions.
  }
%
%
{This result has been established in \cite{cPaternainEtal2019d} based on an analysis of the perturbation function associated to problem \eqref{P:constrainedRL}, defined as   
	\begin{equation}\label{P:perturbed}
	\begin{aligned}
	P(\xi) \triangleq&\max_{\pi \in \ccalP_s(\ccalA)} & \hspace{-2ex} &V(\pi)
	\\
	& \hspace{1.5ex}\subjectto & \hspace{-2ex} &U_i(\pi) \geq c_i+\xi_i \quad \forall i=1,\ldots, m.
	\end{aligned}\end{equation}
	In this work we establish an alternative geometric proof which consists on showing that the sub-level set of the vector valued function $[V(\pi),U_1(\pi),\ldots,U_m(\pi)]$ is convex. This is the subject of the following lemma. 
	\begin{lemma}\label{L:convexity}
	Suppose that~$r(s,a)$ is bounded for all~$(s,a)\in\ccalS\times\ccalA$ and problem \eqref{P:constrainedRL} is feasible. Then, the following set is non-empty and convex. 
		\begin{align}\label{eqn_convex_set}
		\ccalC = \bigl\{\xi \in \mathbb{R}^{m+1}\big| & \exists\, \pi \in \ccalP_{s}(\ccalA), V(\pi) \geq \xi_0, \nonumber\\
		&U_i(\pi) \geq c_i+\xi_i, \forall i=1,\ldots, m \bigr\}.
		\end{align} 
	\end{lemma}
	\begin{proof}
		See Appendix \ref{proof_convexity}.
	\end{proof}
	Now we are in conditions of stating the duality result. }
%
  %
%
\begin{theorem}\label{T:strongDuality}
  Suppose that~$r(s,a)$ is bounded for all~$(s,a)\in\ccalS\times\ccalA$ and that there exists a strictly feasible policy, i.e., there exists $\pi^\ddagger \in \ccalP_s(\ccalA)$ such that $U_i(\pi^\ddagger)> c_i$ for all $i=1,\ldots,m$.  Then, strong duality holds for~\eqref{P:constrainedRL}, i.e., $P^\star = D^\star$.
\end{theorem}
\begin{proof}
	
		Weak duality establishes that $D^\star \geq P^\star$~\cite[Section 5.1.3]{boyd2004convex}. Hence we are left to establish that $P^\star \geq D^\star$. To do so, it suffices to show that there exists $\lambda^\dagger\in\mathbb{R}^m_+$ such that 
\begin{equation}
P^\star\geq d(\lambda^\dagger) =\max_{\pi\in\ccalP(\ccalS)} \ccalL(\pi,{\lambda}^\dagger).
\end{equation}

		Denote by $\boldsymbol{0}_m\in\mathbb{R}^m$ the null vector and notice that $(P^\star,\boldsymbol{0}_m^\top)^\top \in \ccalC$, where $\ccalC$ is the set defined in \eqref{eqn_convex_set}. In addition, we have that $(P^\star,\boldsymbol{0}_m^\top)^\top \notin \interior(\ccalC)$. Indeed, if it were, there would exist a ball centered in $(P^\star,\boldsymbol{0}_m^\top)^\top$ and contained in $\ccalC$, which in turn would contradict the optimality of $P^\star$. 
		
		Since $\ccalC$ is a convex set (cf., Lemma \ref{L:convexity}) and $(P^\star,\boldsymbol{0}_m^\top)^\top$ is in its boundary, the supporting hyperplane theorem~\cite[Section 2.5.2]{boyd2004convex} establishes that there exists $\tilde{\lambda} \in \mathbb{R}^{m+1}$ such that 
		\begin{equation}\label{eqn_separating}
		(P^\star,\boldsymbol{0}_m^\top) \tilde{\lambda} \geq \xi^\top \tilde{\lambda},
		\end{equation}
		for all $\xi \in \ccalC$.  We claim that $\tilde{\lambda}$ belongs to the nonnegative orthant and that its first component $\tilde{\lambda}_0$ is strictly positive. We prove these facts at the end of this proof. Let $\lambda^\dagger = \tilde{\lambda}/\tilde{\lambda}_0$ and 
		\begin{equation}
		\pi^\dagger = \argmax_{\pi\in\ccalP_{s}(\ccalA)} V(\pi) + \sum_{i=1}^m \lambda^\dagger_i (U_i(\pi)-c_i).
		\end{equation}
	Then, let $\xi^\dagger = (V(\pi^\dagger), U(\pi^\dagger)-c)$ and notice that $\xi^\dagger\in \ccalC$. Using this definition, write the dual function associated to the problem \eqref{P:constrainedRL} evaluated at $\lambda^\dagger$ as  
		\begin{equation}
d(\lambda^\dagger) = \ccalL(\pi^\dagger,\lambda^\dagger) = (1,\lambda^\dagger)^\top \xi^\dagger =\left(\tilde{\lambda}/\tilde{\lambda}_0\right)^\top \xi^\dagger \leq P^\star,
		\end{equation} 
		where the last inequality follows from \eqref{eqn_separating}. This completes the proof that there exists $\lambda^\dagger$ such that $P^\star \geq d(\lambda^\dagger)$. Moreover, $\lambda^\dagger$ belongs to the set of optimal dual multipliers. 
		
		To complete the proof, we are left to show that indeed $\tilde{\lambda}\in\mathbb{R}_+^{m+1}$ and that its first component is strictly positive. We argue by contradiction. Assume that for some $i=0,1,\ldots,m$, we have that $\tilde{\lambda_i}<0$. Since $\ccalC$ is not bounded below, i.e., we can choose $\xi_i$ arbitrarily small. Hence, there exists $\xi\in\ccalC$ such that \eqref{eqn_separating} is violated. This proves that $\tilde{\lambda}\in\mathbb{R}^{m+1}_+$.  

Likewise, assume that $\tilde{\lambda_0} = 0$. Then, \eqref{eqn_separating} reduces to 
		\begin{equation}\label{eqn_separating_aux2}  
		0 \geq \sum_{i=1}^m\xi_i\tilde{\lambda}_i. 
		\end{equation}
		Recall that $\tilde{\lambda}\in\mathbb{R}^{m+1}_+$ and by virtue of the supporting hyperplane theorem $\tilde{\lambda} \neq 0$. Hence, for  \eqref{eqn_separating_aux2} to hold for every $\xi \in\ccalC$, at least one index $i=1,\ldots, m$ is such that $\xi_i<0$ or that $\xi=0$.  This contradicts the fact that the problem \eqref{P:constrainedRL} is strictly feasible. Indeed, notice that since the problem is strictly feasible there is $\xi\in\ccalC$ such that $\xi_i>0$ for $i=1,\ldots,m$.   

\end{proof}
Theorem~\ref{T:strongDuality} establishes a fundamental equivalence between the primal~\eqref{P:constrainedRL} and the dual~\eqref{P:dualRL} problems. Indeed, since~\eqref{P:constrainedRL} has no duality gap, its solution can be obtained through~\eqref{P:dualRL}.{ What is more, the trade-offs expressed by~$w_i$ in~\eqref{eqn_risk_aware_reward} are the same as those expressed by the constraint specification~$c_i$ in the sense that they trace the same Pareto front. Nevertheless, the relationship between~$c_i$ and~$w_i$ is not trivial and that specifying the constrained problem is often considerably simpler. 


The theoretical importance of the previous result notwithstanding, it does not yield a procedure to solve~\eqref{P:constrainedRL} since evaluating the dual function involves a maximization problem that is intractable for general classes of distributions. In the next section, we study the effect of using a finite parametrization for the policies and show that the price to pay in terms of duality gap depends on how ``good'' the parametrization is. If we consider, for instance, a neural network---which are universal function approximators~\cite{funahashi1989approximate, cybenko1989approximation, hornik1989multilayer, lu2017expressive, lin2018resnet}---the loss in optimality can be made arbitrarily small. 
%
%
\subsection{Back to parametric policies}
\label{sec_params}
Having established the zero duality gap in the case of stochastic kernels we turn our attention back to the case of policies parameterized by $\theta \in \mathbb{R}^d$. In this section, we focus our attention on a widely used class of parameterizations that we term \textit{near-universal}, which are able to model any stochastic kernel in $\ccalP_s(\ccalA)$ to within a stated accuracy. We formalize this concept in the following definition.
\begin{definition}\label{def_universal_param}
  Let $\epsilon>0$, a parametrization $\pi_\theta$ is an $\epsilon$-universal parametrization of functions in $\ccalP_s(\ccalA)$ if for any $\pi\in\ccalP_s(\ccalA)$ there exists a parameter $\theta \in \mathbb{R}^d$ such that
  \begin{equation}
\max_{s\in\ccalS}\int_{\ccalA}\left|\pi(a|s)-\pi_\theta(a|s)\right|\,da\leq \epsilon.
  \end{equation}  
  \end{definition}
The previous definition includes all parameterizations that induce distributions that are close to distributions in $\ccalP_s(\ccalA)$ in total variational norm. Notice that this is a milder requirement than approximation in uniform norm which is a property that has been established to be satisfied by radial basis function networks
~\cite{park1991universal}, reproducing kernel Hilbert spaces
~\cite{sriperumbudur2010relation} and deep neural networks
~\cite{hornik1989multilayer}. {Since the objective function and the constraints in Problem \eqref{P:constrainedRL} involve an infinite horizon, the policy is applied an infinite number of times. Hence, the error introduced by the parametrization could accumulate and induce distributions over trajectories that differ considerably. We claim in the following lemma that this is not the case. 

To make the statement precise we are required to define the occupation measure of a discounted MDP. { Let $p_\pi^t(s,a)$ be the Radon-Nikodym derivative of the measure $\mathbb{P}(s_{t}\in \tilde{\ccalS},a_{t} \in \tilde{\ccalA} \mid \pi) $.} The occupancy measure is then a discounted average of $p_\pi^t(s,a)$ (see e.g. \cite{borkar1988convex})
  \begin{equation}\label{eqn_occupation_measure}
    \rho(s,a) = \left(1-\gamma\right)\sum_{t=0}^\infty \gamma^tp_\pi^t(s,a).
    \end{equation}

We are now in conditions of stating the following result.
\begin{lemma}\label{lemma_meassure_bound}
  Let $\rho$ and $\rho_\theta$ be occupation measures induced by the policies $\pi\in\ccalP_s(\ccalA)$ and $\pi_{\theta}$ respectively, where $\pi_{\theta}$ is an $\epsilon$- parametrization of $\pi$. Then, it follows that
  \begin{equation}\label{eqn_occupation_bound}
\int_{\ccalS\times\ccalA} |\rho(s,a)-\rho_\theta(s,a)| \,dsda \leq \frac{\epsilon}{1-\gamma}.
    \end{equation}
\end{lemma}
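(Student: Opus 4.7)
The plan is to expand the two occupation measures as discounted sums of one-step marginals, control each time-$t$ contribution via the $\epsilon$-universal property, and reassemble the resulting series. Adopting the standard normalized definition $\rho(s,a) = (1-\gamma)\sum_{t=0}^\infty \gamma^t p_t^\pi(s)\pi(a|s)$, where $p_t^\pi$ denotes the state marginal at time $t$ under $\pi$ (and analogously for $\rho_\theta$ with $p_t^{\pi_\theta}$ and $\pi_\theta$, both chains sharing the same initial distribution), the triangle inequality moved inside the absolute value yields
\begin{equation*}
  \int_{\ccalS\times\ccalA}|\rho(s,a)-\rho_\theta(s,a)|\,dsda \leq (1-\gamma)\sum_{t=0}^\infty \gamma^t d_t,
\end{equation*}
with $d_t \triangleq \int|p_t^\pi(s)\pi(a|s) - p_t^{\pi_\theta}(s)\pi_\theta(a|s)|\,dsda$.

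Next I would split $d_t$ into a policy-approximation piece and a state-marginal piece through the identity
\begin{equation*}
  p_t^\pi(s)\pi(a|s) - p_t^{\pi_\theta}(s)\pi_\theta(a|s) = p_t^\pi(s)\bigl[\pi(a|s)-\pi_\theta(a|s)\bigr] + \bigl[p_t^\pi(s)-p_t^{\pi_\theta}(s)\bigr]\pi_\theta(a|s).
\end{equation*}
The first summand integrates to at most $\epsilon$ by Definition~\ref{def_universal_param}, since $p_t^\pi$ is a probability density on $\ccalS$ and $\int_\ccalA|\pi(a|s)-\pi_\theta(a|s)|\,da \leq \epsilon$ uniformly in $s$; the second summand integrates to $u_t \triangleq \|p_t^\pi - p_t^{\pi_\theta}\|_1$, since $\pi_\theta(\cdot|s)$ integrates to one. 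Therefore $d_t \leq \epsilon + u_t$.

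I would then bound $u_t$ by induction on $t$. Because the transition kernel $P(s'\mid s,a)$ is Markov, marginalizing over the next state is a contraction in $L^1$: $u_{t+1} = \|p_{t+1}^\pi - p_{t+1}^{\pi_\theta}\|_1 \leq d_t \leq u_t + \epsilon$. With $u_0 = 0$, induction yields $u_t \leq t\epsilon$ and hence $d_t \leq (t+1)\epsilon$. Plugging this back into the first display and using $\sum_{t=0}^\infty \gamma^t(t+1) = (1-\gamma)^{-2}$ gives $\|\rho-\rho_\theta\|_1 \leq (1-\gamma)\cdot\epsilon\cdot(1-\gamma)^{-2} = \epsilon/(1-\gamma)$, which is precisely~\eqref{eqn_occupation_bound}.

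The main obstacle is making sure the per-step policy discrepancy does not compound faster than linearly in $t$; the crucial observation is the total-variation nonexpansiveness of the Markov transition, which caps the fresh error injected at each step by exactly $\epsilon$ and is then absorbed by the discount factor, converting the otherwise divergent arithmetic accumulation $u_t \leq t\epsilon$ into the finite geometric-arithmetic sum $\sum_t \gamma^t(t+1) = (1-\gamma)^{-2}$. This is what makes the $(1-\gamma)$ normalization of the occupation measure the right one and yields the advertised linear-in-$1/(1-\gamma)$ dependence.
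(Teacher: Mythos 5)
Your proof is correct and follows essentially the same route as the paper's: the same splitting of the joint-density difference into a policy-approximation term (bounded by $\epsilon$ via Definition~\ref{def_universal_param}) and a state-marginal term handled by the $L^1$-nonexpansiveness of the Markov transition. The only cosmetic difference is that you unroll the recursion explicitly to $d_t \leq (t+1)\epsilon$ and sum the arithmetic-geometric series, whereas the paper keeps the sum self-referential and solves $S \leq \gamma S + \epsilon$; both yield the identical bound $\epsilon/(1-\gamma)$.
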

\begin{proof}
See Appendix \ref{proof_meassure_bound}.
  \end{proof}
The previous result, although derived as a technical result required to bound the duality gap for parametric problems, has a natural interpretation. The smaller $\gamma$\textemdash the less we are concerned about rewards far in the future\textemdash the smaller the error in the approximation of the occupation measure.

%
%
%
%
%
%
%

When the policies were selected from arbitrary distributions we showed in Theorem \ref{T:strongDuality} that the safe learning problem has zero duality gap and thus, $P^\star=D^\star$. This is no longer the case when we consider parametric policies. However, we claim and prove in the following theorem that the duality gap is bounded by a function that is linear with the approximation error $\epsilon$ of the parameterization. 
%
\begin{theorem}\label{T:parametrization}
  Suppose that~$r(s,a)$ is bounded for all~$(s,a)\in\ccalS\times\ccalA$ by a constant $B_r>0$ and that there exists a strictly feasible solution for \eqref{P:constrainedRL}. Let $\lambda_\epsilon^\star$ be the solution to the dual problem associated with the perturbed problem \eqref{P:perturbed} with perturbation $\xi_i=\epsilon/(1-\gamma)$ for all $i=1,\ldots,m$. Then, if the parametrization $\pi_\theta$ is an $\epsilon-$universal parametrization of functions $\pi\in\ccalP_s(\ccalA)$ it follows that 
  \begin{equation}
 \left(B_{r}+\left\|\lambda_\epsilon^\star\right\|_1\right)\frac{\epsilon}{1-\gamma}+    P_\theta^\star\geq D_\theta^\star \geq P_\theta^\star
  \end{equation}
  where~$P^\star_\theta$ is the optimal value of~\eqref{P:parametric} and~$D_{\theta}^\star$ the value of the parametrized dual problem \eqref{P:dual_parametric}.
  \end{theorem}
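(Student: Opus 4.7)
The plan is to prove the upper bound $P^\star \geq D^\star_\theta$ by a direct inclusion argument, and then bound $D^\star_\theta$ from below by producing an explicit feasible parametric policy whose value is close to $P^\star$. For the upper bound, I would observe that every parametrized policy $\pi_\theta$ belongs to $\ccalP(\ccalS)$, so for every $\lambda \geq 0$ the parametric Lagrangian is dominated by its non-parametric counterpart, giving $d_\theta(\lambda) \leq d(\lambda)$. Minimizing over $\lambda$ yields $D^\star_\theta \leq D^\star$, and Theorem~\ref{T:strongDuality} closes the inequality via $D^\star = P^\star$.

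For the lower bound, I would first invoke weak duality to reduce the task to showing $P^\star_\theta \geq P^\star - (B_r + \|\lambda_\epsilon^\star\|_1)\epsilon/(1-\gamma)$. The natural candidate is to approximate an optimal policy of the tightened perturbed problem $P(\xi)$ with $\xi_i = \epsilon/(1-\gamma)$: let $\pi^\star_\epsilon$ achieve this value. By Definition~\ref{def_universal_param}, there exists $\theta \in \ccalH$ with $\pi_\theta$ within $\epsilon$ of $\pi^\star_\epsilon$ in total variation. By Lemma~\ref{lemma_meassure_bound}, the induced occupation measures satisfy $\|\rho - \rho_\theta\|_1 \leq \epsilon/(1-\gamma)$, so boundedness of $r$ by $B_r$ and of the indicator constraint functions by $1$ transfer the occupation-measure error to the objective and constraints: $|V(\pi^\star_\epsilon) - V(\pi_\theta)| \leq B_r\epsilon/(1-\gamma)$ and $|U_i(\pi^\star_\epsilon) - U_i(\pi_\theta)| \leq \epsilon/(1-\gamma)$.

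The second inequality ensures $U_i(\pi_\theta) \geq (c_i + \epsilon/(1-\gamma)) - \epsilon/(1-\gamma) = c_i$, so $\pi_\theta$ is feasible for the parametric problem \eqref{P:parametric}; the first inequality then yields $P^\star_\theta \geq V(\pi_\theta) \geq P(\xi) - B_r \epsilon/(1-\gamma)$. What remains is to relate $P(\xi)$ to $P^\star$. Since Slater's condition carries over (for small enough $\epsilon$) to the perturbed problem, strong duality applies there as well, giving $P(\xi) = \min_{\lambda \geq 0}\bigl[d(\lambda) - \lambda \cdot \xi\bigr] = d(\lambda_\epsilon^\star) - \lambda_\epsilon^\star \cdot \xi$, where I used that the perturbed Lagrangian is $\ccalL(\pi,\lambda) - \lambda\cdot\xi$, hence the dual functions differ by $\lambda\cdot\xi$. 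Because $d(\lambda_\epsilon^\star) \geq \min_{\lambda \geq 0} d(\lambda) = D^\star = P^\star$, I obtain $P(\xi) \geq P^\star - \|\lambda_\epsilon^\star\|_1 \epsilon/(1-\gamma)$. Chaining the three inequalities and applying weak duality $D^\star_\theta \geq P^\star_\theta$ delivers the claimed bound.

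The main obstacle I expect is the correct propagation of the total-variation approximation through an infinite-horizon discounted objective, which would blow up naively but is tamed by Lemma~\ref{lemma_meassure_bound} to a $1/(1-\gamma)$ factor; a secondary delicate point is justifying that strong duality applies to the perturbed problem (so that the factor $\|\lambda_\epsilon^\star\|_1$ is well-defined and controls the shift in $P$), which should follow from the same convexity-of-occupation-measures argument used in Theorem~\ref{T:strongDuality}, together with the observation that Slater's slack is only tightened by $\epsilon/(1-\gamma)$.
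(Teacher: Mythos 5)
Your proof is correct and reaches the stated bound, but the lower bound is argued by a genuinely different route than the paper's. The upper bound $P^\star \geq D_\theta^\star$ is identical in both (parametric policies form a subset of $\ccalP(\ccalS)$, so $d_\theta(\lambda)\leq d(\lambda)$, then Theorem~\ref{T:strongDuality}). For the lower bound, you work in the primal: you approximate a maximizer $\pi_\epsilon^\star$ of the tightened problem $P(\xi)$ with $\xi_i=\epsilon/(1-\gamma)$ by a parametric policy, show via Lemma~\ref{lemma_meassure_bound} that the approximant remains feasible for \eqref{P:parametric} (the perturbation exactly absorbs the constraint error), deduce $P_\theta^\star \geq P(\xi) - B_r\epsilon/(1-\gamma)$, and finish with weak duality $D_\theta^\star\geq P_\theta^\star$. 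The paper never constructs a feasible parametric policy; it instead bounds the parametric dual function pointwise, $d_\theta(\lambda)\geq d_\epsilon(\lambda) - B_r\epsilon/(1-\gamma)$ where $d_\epsilon$ is the dual of the perturbed problem, evaluates at $\lambda_\theta^\star$, and invokes strong duality of the perturbed problem to reach $P_\epsilon^\star$. Both routes terminate at the same perturbation inequality $P(\xi)\geq P^\star - \|\lambda_\epsilon^\star\|_1\,\epsilon/(1-\gamma)$, which you derive explicitly from $P(\xi)=d(\lambda_\epsilon^\star)-\lambda_\epsilon^\star{}^{\top}\xi$ and $d(\lambda_\epsilon^\star)\geq D^\star=P^\star$, whereas the paper asserts it; this is a welcome addition. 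Your version buys a slightly stronger conclusion (near-optimality of the parametric \emph{primal} $P_\theta^\star$, not just of $D_\theta^\star$) at the cost of needing a maximizer of the perturbed problem (available from the compactness of the occupation-measure set used in Theorem~\ref{T:strongDuality}) and strict feasibility of the perturbed problem, which you correctly flag. One shared caveat: when you transfer $\int|\rho-\rho_\theta|\leq\epsilon/(1-\gamma)$ to $|V(\pi)-V(\pi_\theta)|\leq B_r\epsilon/(1-\gamma)$, you are implicitly treating $V$ as $\int r\,d\rho$; with the normalization $\rho=(1-\gamma)\sum_t\gamma^t p_t$ used in Lemma~\ref{lemma_meassure_bound} one has $V=\frac{1}{1-\gamma}\int r\,d\rho$, which would introduce an extra $1/(1-\gamma)$. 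The paper's own proof makes the same elision, so your bound matches the theorem as stated, but be aware of the discrepancy.
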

\begin{proof}
See Appendix \ref{proof_param}.
  \end{proof}
The implication of the previous result is that there is almost no price to pay by introducing a parametrization. By solving the dual problem \eqref{P:dual_parametric} the sub-optimality achieved is of order $\epsilon$, i.e., the error on the representation of the policies. Notice that this error could be made arbitrarily small by increasing the representation ability of the parametrization, by for instance increasing the dimension of the vector of parameters $\theta$. 

{The latter means that the main burden of solving \eqref{P:parametric} is in computing the dual function. Which is equivalent to solving an unconstrained RL problem (cf., \eqref{reward_constrained} and \eqref{eqn_parametric_lagrangian} and the related discussion). Thus, solving constrained RL problems of the form \eqref{P:parametric} is not much more challenging than its unconstrained counterpart. Moreover, working on the dual domain provides the advantages discussed in Section \ref{sec_safe_learning} as compared to working the risk aware reward.  

Based on the previous results we can exploit dual and primal-dual algorithms to solve the safe RL problem \eqref{P:parametric}. This is the subject of the next section.


\section{Primal-Dual Algorithm}\label{sec_pd}
As previously stated, the advantage of working in the dual domain is that the dual function~\eqref{eqn_parametric_dual_function} is convex and therefore it can be minimized using gradient descent (or its stochastic variants). Primal-dual algorithms are well-known and they have been considered extensively in the optimization literature \cite{arrow_hurwicz,nedic2009subgradient,paternain2016online} and specifically for RL \cite{chow2017risk}. We include the details of these algorithms next since we use them to solve the problem \eqref{P:parametric}. It is worth pointing out that even though it is known that primal-dual algorithms solve the dual problem they do not provide any information regarding the solution of the primal. By establishing a bound in the duality gap (cf., Theorem \ref{T:strongDuality} and Theorem \ref{T:parametrization}) we can claim that solving the dual problem is a good approximation of the primal. 

For completeness we start by considering plain dual gradient descent. Let $\nabla_\lambda d_\theta(\lambda)$ be the gradient of the dual function, then dual gradient descent is defined by the following update rule
\begin{equation}\label{eqn_pure_gradient_descent}
\lambda^{k+1} = \left[\lambda^k-\eta_\lambda \nabla_\lambda d_\theta(\lambda^k)\right]^+,
  \end{equation}
where $\eta_\lambda>0$ is a step-size and the operation $\left[\cdot\right]^+$ denotes a projection onto the positive orthant of $\mathbb{R}^m$. The latter algorithm is ensured to converge to a neighborhood of the dual optimum \cite[pp 43-45]{bertsekas1999nonlinear}. In addition, the gradient of the dual function can be computed using Dankin's Theorem (see e.g. \cite[Chapter 3]{bertsekas15c}) by evaluating the constraints in the original problem \eqref{P:parametric} at the primal maximizer of the Lagrangian, i.e.,
\begin{equation}\label{eqn_dual_gradient}
\nabla_\lambda d_\theta(\lambda^k) = U(\theta^\star(\lambda^k)) -c_i,
\end{equation}
where $\theta^\star(\lambda^k) \triangleq \argmax_{\theta\in\mathbb{R}^d}\ccalL(\lambda^k,\theta)$. As discussed in Section \ref{sec_safe_learning} computing the maximizer of the Lagrangian for a given $\lambda$ is equivalent to maximizing the expected discounted control cost for the risk aware reward~\eqref{eqn_risk_aware_reward} and~\eqref{reward_constrained}.

Depending on the framework of interest this maximization can be done using several techniques. In the case of model-free scenarios, classic RL algorithms such as policy gradient~\cite{williams1992simple,sutton2000policy,paternain2018stochastic} or actor-critic methods~\cite{konda2000actor} can be used. Since these algorithms are gradient based, they are not guaranteed to converge to a global maximum {(unless the state-action space is finite \cite{agarwal2020optimality,ding2020natural}}). {However, their empirical success suggests that they converge to solutions with small suboptimality.} This is, we can compute the iteration $k+1$ of the policy's parameter as 
\begin{equation}\label{eqn_primal_max}
\ccalL_\theta(\theta^{k+1},\lambda^k) \approx \max_{\theta\in\mathbb{R}^d} \ccalL_\theta(\theta,\lambda^k).
\end{equation}

{Then, the dual variable is updated following the gradient descent scheme suggested in~\eqref{eqn_pure_gradient_descent}, where we replace the gradient of the dual function given in \eqref{eqn_dual_gradient} by the approximation based on the primal variable available $\theta^{k+1}$. This yields the following update 
\begin{equation}\label{eqn_gradient_descent}
\lambda^{k+1} =\left[\lambda^k - \eta_\lambda \left(U(\theta^{k+1})-c\right)\right]_+.
  \end{equation}
The algorithm given by~\eqref{eqn_primal_max}--\eqref{eqn_gradient_descent} is summarized for convenience under Algorithm~\ref{alg_dual_descent}.}
%
\begin{algorithm}[t]
  \caption{Dual Descent}
  \label{alg_dual_descent} 
\begin{algorithmic}[1]
 \renewcommand{\algorithmicrequire}{\textbf{Input:}}
 \renewcommand{\algorithmicensure}{\textbf{Output:}}
 \Require $\eta_\lambda$
 \State \textit{Initialize}: $\theta^0 = 0$, $\lambda^0 = 0$
  \For {$k=0,1\ldots$}
  \State Compute a primal approximation via an RL algorithm to get $\theta^{k+1}$ such that 
   \begin{align*}
      \ccalL_{\theta}(\theta^{k+1},\lambda^k) \approx \max_{\theta\in\mathbb{R}^d} \ccalL_{\theta}(\theta,\lambda^k)
  \end{align*}

  \State Compute the dual ascent step 
  \begin{align*}
  \lambda^{k+1} = \left[\lambda^k-\eta_\lambda \left(U(\theta^{k+1})-c\right)\right]_+
  \end{align*}
  \EndFor
 \end{algorithmic}
 \end{algorithm}
The previous expression approximates a dual descent step since the maximization is not exact. However, it can still be guaranteed to converge to a neighborhood of the solution under mild assumptions \cite{fazlyab2018distributed}. 

The bound on the number of iterations required is linear with the inverse of the desired accuracy $\varepsilon$ \cite[pp 43-45]{bertsekas1999nonlinear}. This is the overhead of solving the dual problem as compared to solving the risk aware optimal control problem for a hyperparameter $\lambda$. The small overhead (in proportion to solving the unconstrained RL problem) makes this approach beneficial as compared to that of selecting a hyperparameter in ad-hoc forms as we have discussed in Section \ref{sec_safe_learning}. To complete this discussion, we have in fact introduced two new parameters, the initial condition $\lambda_0$ and the step-size $\eta_\lambda$. Notice that this substitution is indeed beneficial. The algorithm is guaranteed to converge for any small enough $\eta_\lambda$, which makes this selection easier than that of tuning the hyperparameter $\lambda$. Moreover, since the dual function is convex the algorithm is guaranteed to converge for any initial condition $\lambda_0$. 

Granted, often the cost of running policy gradient or actor-critic algorithms until convergence before updating the dual variable results in an algorithm that is computationally prohibitive. As an alternative, it is common in the context of optimization to update the primal and dual variables in parallel \cite{arrow_hurwicz}. This idea can be applied in the context of RL as well, where a policy gradient \textemdash or actor-critic as in \cite{bhatnagar2012online, tessler2018reward}\textemdash update is followed by an update of the multipliers along the direction of the constraint slack. In these algorithms the update on the policy is on a faster scale than the update of the multipliers, and therefore they operate, from a theoretical point of view, as dual descent algorithms (cf., Algorithm \ref{alg_dual_descent}). In particular, the proofs in \cite{bhatnagar2012online, tessler2018reward} rely on the fact that these timescales are such that it allows to consider the multiplier as constant. Thus, the maximization in \eqref{eqn_primal_max} is replaced by a gradient ascent step
\begin{equation}\label{eqn_first_grad_ascent}
  \theta^{k+1} = \theta^k + \eta_{\theta} \nabla_\theta \ccalL(\theta^k,\lambda^k), 
  \end{equation}
where $\eta_\theta>0$ is the step-size of the primal ascent and as previously discussed $\eta_{\theta}\gg \eta_\lambda$.

Notice that the computation of the expressions \eqref{eqn_first_grad_ascent} and \eqref{eqn_gradient_descent} is not straightforward since they involve expectations across trajectories. However, the gradient with respect to $\theta$ can be done using the Policy Gradient Theorem \cite{sutton2000policy}. Let us define
\begin{algorithm}[t]
  \caption{Stochastic Primal-Dual for Safe Policies}
  \label{alg_pd} 
\begin{algorithmic}[1]
 \renewcommand{\algorithmicrequire}{\textbf{Input:}}
 \renewcommand{\algorithmicensure}{\textbf{Output:}}
 \Require $\theta^0,\lambda^0,T, \eta_\theta,\eta_\lambda, \delta, \epsilon$
  \For {$k = 0,1,\ldots $}
  \State Simulate a trajectory with the policy $\pi_{\theta^k}(\bbs)$
  \State Estimate primal gradient $\hat{\nabla}_\theta\ccalL(\theta^k,\lambda^k)$ as in \eqref{eqn_stochastic_primal}
  \State Estimate dual gradient $\hat{U}(\theta^k)-s$ as in \eqref{eqn_stochastic_dual}
  \State Update primal variable
  $$
  \theta^{k+1} = \theta^k+ \eta_{\theta} \hat{\nabla}_\theta\ccalL(\theta^k,\lambda^k)
    $$
    \State Update dual variable
  $$
  \lambda^{k+1} = \left[\lambda^k+ \eta_{\lambda}\left(\hat{U}(\theta^k)-c\right)\right]_+
    $$
  \EndFor
 \end{algorithmic}
 \end{algorithm}
%
%
%
\begin{equation} \label{eqn_cum_reward_infinite}
R^{\lambda}(\bbs,\bba) = \sum_{t=0}^\infty \gamma^tr_\lambda(s_t,a_t)
\end{equation}
where $r_\lambda(s_t,a_t)$ is the reward defined in \eqref{reward_constrained} for a given Lagrange multiplier $\lambda\in \mathbb{R}_+$. In addition, let 
  \begin{equation}
    Q^\lambda(s,a) = \mathbb{E}\left[R^\lambda(\bbs,\bba)| s_0=s, a_0 =a\right].
    \end{equation}
Further recall the definition of the occupation measure in the discounted case \eqref{eqn_occupation_measure}. In the case of the undiscounted reward the occupation measure is alternatively defined as $\rho_\theta(s) = \sum_{t=0}^T p^t_{\pi_\theta}(s|s_0)/T$. Then the gradient of the Lagrangian \eqref{eqn_lagrangian} with respect to the parameters of the policy $\theta$ for both formulations yields \cite{sutton2018reinforcement}
    \begin{equation}\label{eqn_primal_gradient}
 \nabla_\theta \ccalL(\theta,\lambda) =  \mathbb{E}_{a\sim\pi_{\theta}(a|s),s\sim \rho_\theta(s)}\left[Q^\lambda(s,a) \nabla_{\theta}\log \pi_{\theta}(a|s) \right].
      \end{equation}
In both expressions of the gradient with respect to the primal and the dual variables we require to compute expectations with respect to the trajectories of the system. To avoid sampling a large number of trajectories, one can instead use stochastic approximations \cite{robbins1951stochastic}. This is, with one sample trajectory, one can compute in the case of the finite horizon problem estimates of $U(\theta^k)$ as
    \begin{equation}\label{eqn_stochastic_dual}
      \hat{U}_i(\theta^k) =\sum_{t=0}^T\mathbbm{1}(s_t\in\ccalS_i),
      \end{equation}
    and
    \begin{equation}\label{eqn_stochastic_primal}
\hat{\nabla}_{\theta}\ccalL(\theta^k,\lambda^k) = R^{\lambda^k}(\bbs,\bba)\nabla_{\theta}\log \pi_{\theta^k}(a_0|s_0).
      \end{equation}
    In cases where the horizon is finite, the previous expressions can be computed without any additional steps and they yield unbiased estimates of the quantities that they estimate. However, for the infinite horizon case, one would require an infinite trajectory for the later to hold. {An alternative is to consider a finite horizon problem with a random stopping time as discussed in Remark 1, where the stopping time is drawn from a geometric distribution.} By computing the expressions in \eqref{eqn_stochastic_primal} and \eqref{eqn_stochastic_dual} over the randomly drawn horizon the estimates obtained are unbiased \cite{paternain2018stochastic}. The stochastic primal-dual algorithm is summarized in Algorithm \ref{alg_pd}. A different alternative to computing \eqref{eqn_stochastic_primal} is to use actor-critic updates as done in \cite{bhatnagar2012online, tessler2018reward}. Actor-critic methods estimate the gradient with less variance and therefore they enjoy better convergence guarantees. In this work we limit the development to the version of the algorithm given in \eqref{eqn_stochastic_primal} for simplicity.

\section{Numerical Results}\label{num_results}

We have proposed a way to find safe policies via primal-dual methods. In this section, to study the behavior of these proposed methods, we consider a continuous navigation task in an environment filled with hazardous obstacles (See Figure \ref{fig:quiver}). An agent is deployed in this environment and its objective is to reach a goal, while avoiding several obstacles of different size and geometry. On more formal terms, the MDP representing this problem is composed of the state space $\ccalS = [0,10] \times [0,10]$, which represents the position of the agent on the x- and y-axis, i.e., $s=(x,y)$. The agent then takes actions, resulting in its movement along the x- and y-axis. These actions are given by a Gaussian policy, namely
\begin{align}
\pi_\theta(a|s)=\frac{1}{\sqrt{(2 \pi)^2 |\Sigma|}} 
e^{-\frac{1}{2}
\bigl(a - \mu_\theta(s) \bigr)^\top \Sigma^{-1} \bigl(a - \mu_\theta(s) \bigr)
 }
\end{align}
where, we consider a covariance matrix $\Sigma= \diag \left( 0.5,0.5\right)$. Thus, the state of the agent evolves according to the dynamics $s_{t+1}=s_t+T_s a_t$, where $T_s$ is the sampling time of the system, chosen to be $T_s=0.05$. Furthermore, the mean of the Gaussian policy is a function approximator given by a weighted linear combination of Radial Basis Functions (RBF). More specifically,
\begin{align}
\mu_\theta(s)=\sum_{i=1}^{d}  \theta_i 	\exp \left(-\frac{\|s - \bar{s}_i \|^2}{2\sigma^2} \right)
\end{align}
where $\theta=[\theta_1,\ldots,\theta_d]^\top$ is the parameter vector to be learned, $\sigma$ is the bandwidth of each RBF kernel and $\bar{s}_i$ its centers. We choose the bandwidth of the RBF to be $\sigma=0.5$ with centers spaced $0.25$ units from each other.

\begin{figure}[t]
	\centering
	\includegraphics[scale=1]{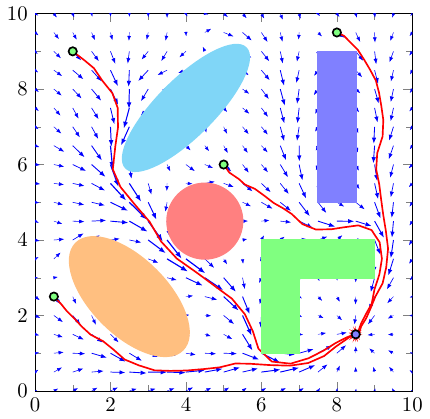} 
	\caption{Navigation policy learned after $40{,}000$ iterations. The agent is trained to navigate to a goal located at $(8.5,1.5)$. Several sample trajectories have been plotted, starting at $(1,9)$, $(8,9.5)$, $(0.5,2.5)$ and $(5,6)$.}
	\label{fig:quiver}
\end{figure}

The specification of the navigation task is given by an infinite horizon problem with discount factor $\gamma=0.95$, and we consider a reward given by the distance between the current state, $s$, and the goal $s_{\texttt{GOAL}}$. Namely,
\begin{align}
r(s,a)=-\|s - s_{\texttt{GOAL}}\|^2,
\end{align}
where, for this specific scenario, we consider the goal to be located at $s_{\texttt{GOAL}}=(8.5,1.5)$. Furthermore, we introduce a constraint into the optimization problem for each of the obstacles, with a demanded level of safety of $1-\delta_i=0.999$ for each of them. We train the agent via the primal-dual algorithm introduced in this work (Algorithm \ref{alg_pd}), where the primal step is performed via the well-known policy gradient \cite{sutton2000policy}. Furthermore, we consider a primal step size $\eta_{\theta}=0.1$ and a dual step size $\eta_{\lambda}=0.05$. We train the policy for $40{,}000$ iterations (until convergence) and obtain the navigation vector flow illustrated in Figure \ref{fig:quiver}.

\begin{figure}[t]
	\centering
	\includegraphics[scale=1]{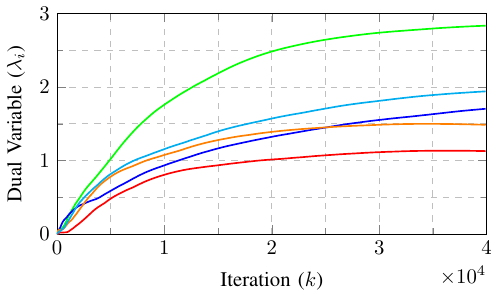} 
	\caption{Average value of the dual variables. The colors of the lines are set accordingly to the constraint they represent as shown in Fig. \ref{fig:quiver}.}
	\label{fig:dualVariables}
\end{figure}

We plot several sample trajectories, illustrating the different possible paths along the learned navigation flow. Overall, given the high level of safety demanded ($1-\delta_i=0.999$), the policy learned by our method avoids all obstacles. Perhaps, a more important observation is that not all obstacles are equally difficult to avoid. By difficult we mean how much the agent must deviate from its straight-line trajectory to the goal to avoid the obstacle. For example, the circular red obstacle in the center of the map appears to be easy to circumnavigate; on the other hand, the green L-shaped obstacle close to the goal appears to require considerable trajectory deviation to be avoided.

In summary, the constraints associated to each of the obstacles are not equally restrictive in the optimization problem. Something that can be further analyzed by taking a closer look at the evolution of the dual variables, which we have considered in Figure \ref{fig:dualVariables}. First, recall that our dual formulation is equivalent to solving a problem with reward
\begin{align}
\label{E:numresreward}
  r_\lambda(s,a) =r(s,a) + \sum_{i=1}^m\lambda_i(\mathbbm{1}(s\in\ccalS_i) -c_i).
\end{align}
where the dual ascent step potentially modifies the values of the dual variables $\lambda_i$ at each iteration, leading to a dynamically changing reward as the algorithm is run. An important observation from the reward \eqref{E:numresreward} is that higher values of the dual variable $\lambda_i$ represent more restrictive constraints (harder to navigate obstacles). In our case, in Figure \ref{fig:dualVariables}, the dual variable associated with the red obstacle has the lowest value, while the opposite is true of the dual variable related to the green obstacle. Hence, in our proposed approach, dual variables allow us to assess the difficulty in satisfying each constraint with its specified level of safety. Clearly, there is great benefit to this more precise analysis (rather than trying to guess purely from observation of the vector field in Figure \ref{fig:quiver}). In our case, it makes it clear which obstacles are worse. While the case between the red and green obstacles might appear clear or apparent to the naked eye, the same might not hold true for other more subtle obstacles. For example, it is not clear which obstacle is the worse between the orange and the cyan one, however looking at the dual variables tells us that the cyan obstacle is slightly worse than the orange one. 

\begin{figure}[t]
	\centering
	\includegraphics[scale=1]{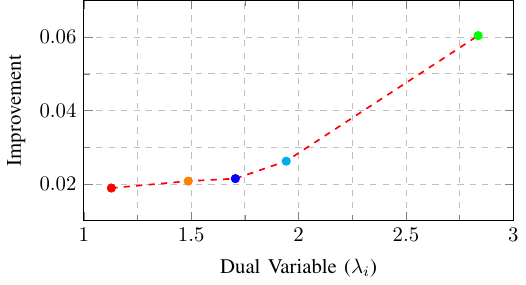} 
	\caption{Improvement in the average reward obtained by removing the obstacle corresponding to the each dual variable and retraining the navigation task. The colors of the points are coordinated with the constraint they represent (cf. Fig. \ref{fig:quiver}). The improvement in the reward is normalized with respect to the number of time steps $T$.}
	\label{fig:obstacleRemoval}
\end{figure}

An immediate consequence of our previous observations and the structure of the reward \eqref{E:numresreward} is the following. If we remove from our system one of the constraints and then retrain the system, obtaining a new policy without the removed constraints, what type of improvement in the reward should we expect? Clearly, we expect larger improvements in the resulting reward if we remove constraints associated with larger dual variables, as they are more restrictive in the optimization problem. We study this in Figure \ref{fig:obstacleRemoval}. As expected, in our navigation scenario, removing the green obstacle provides the larger reward improvement, while the smallest improvement is given by the red obstacle. This is another way of validating what we previously observed: the green obstacle is clearly more cumbersome than the others.

\subsection{Safety guarantees}

\begin{figure}[t]
	\centering
	\includegraphics[scale=1]{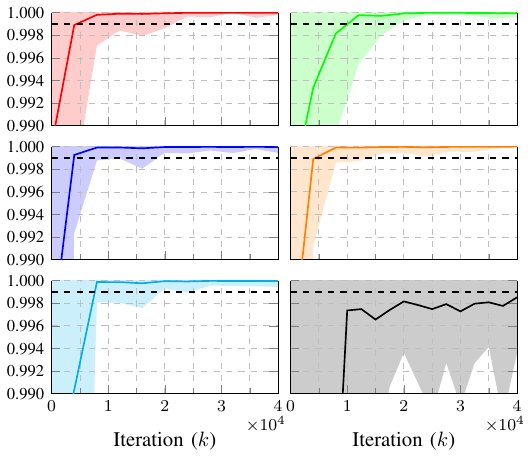} 
	\caption{Evolution of the safety guarantees with respect to the algorithm iterations. Mean values and one standard deviation band have been plotted for $500$ independent evaluations of the policy every $40{,}000$ iterations. The safety requirement $1-\delta_i=0.999$ is shown in dashed lines. Obstacles are color coded according to Fig. \ref{fig:quiver}. The bottom right plot represents the resulting safety guarantees of using a simple policy gradient (without the primal-dual method), where we have naively set the weights of all constraints to the smallest dual variable, corresponding to the red obstacle, $\lambda=1.128$.}
	\label{fig:safetyProb}
\end{figure}

Now, let us take a closer look at the safety guarantees associated with our proposed approach. First, we evaluate the safety attained for each constraint (in our scenario, each obstacle). We look at instances of these probabilities as the algorithm iterates and we consider $500$ independent evaluations of the algorithm at intervals of $40{,}000$ iterations. In Figure \ref{fig:safetyProb}, we plot the resulting mean value and one standard deviation band. Besides the bottom-right plot, we have five different plots, color-coded according to the obstacle they represent (cf. Fig. \ref{fig:quiver}). Recall that the safety requirement is $1-\delta_i=0.999$, which we plot in dashed lines. 

The primal-dual algorithm, as expected, reaches the required safety level first in mean (approximately, after $10{,}000$ iterations, the mean value for the five constraints is under the required safety level). As the algorithm keeps iterating, the resulting distribution gets tighter, and ultimately, the one standard deviation band also falls into the safety requirements. Also, observe that the green obstacle (the one that is harder to navigate around), is the one that takes longer to reach the required level of safety. This obstacle takes longer in mean to reach the required level of safety and has a wider distribution. Furthermore, it is worth remarking that the algorithm converges to a policy which is safer than the one demanded. 

\begin{figure}[t]
	\centering
	\includegraphics[scale=1]{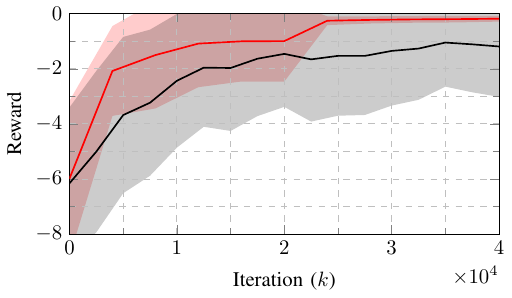} 
	\caption{Evolution of the reward as the algorithm iterates. Mean values and one standard deviation band have been plotted for $500$ independent evaluations. The red line corresponds to the primal-dual algorithm The black line corresponds to setting the weights fixed to the largest dual variable $\lambda=2.835$. The reward is normalized with respect to the number of time steps $T$.}
	\label{fig:reward}
\end{figure}

As an aside, it is reasonable to try to compare our proposed primal-dual approach with a naiver approach. To this end, we consider learning a policy where the weight of the dual variables in the reward \eqref{E:numresreward} are set to a fixed value. Recall that as we discussed previously (cf. equation \eqref{eqn_risk_aware_reward}), this is equivalent to modifying the reward function so as to make it risk-aware. An obvious issue is that, naively, it is not clear which value to use. Nonetheless, we can ask ourselves if we can use one of the values we obtained in Figure \ref{fig:dualVariables} fixed and for all obstacles. We know that if we use the more restrictive value (the one related to the green obstacle), we should be able to maintain the demanded level of safety for all the constraints. However,  what about other values, can we choose, e.g., the lowest value (the one from the red obstacle) and expect to maintain our desired level of safety? The bottom-right plot in Figure \ref{fig:safetyProb}, shows the resulting safety of training the system with a fixed weight of $\lambda=1.128$ (the red obstacle) for all the constraints. Since all the obstacles are set to the same weight, there is a single resulting safety guarantee, the one for all the obstacles. More importantly, the results show that using this weight, does not allow to attain the required level of safety, not in mean, and the standard deviation band does not tighten as before. Clearly, this shows that the choice of the weights used in a reward-shaping approach are not obvious (not even for a simple safe navigation problem as the one we are studying).

Another reasonable approach is to be more restrictive and set all weights fixed to the largest dual variable $\lambda=2.835$ (corresponding to the green obstacle). While setting all variables to the most restrictive value will result in safe policies satisfying the constraints, it has repercussions on the overall reward obtained by the policy. In Figure \ref{fig:reward} we plot the normalized reward (reward per time step) as the policy is trained. We plot the mean reward and one standard deviation band over $500$ independent evaluations of the algorithm at intervals of $40{,}000$ iterations. For the primal-dual algorithm (cf., Algorithm \ref{alg_pd}), the resulting reward converges in around $25{,}000$ iterations of training, a point after which the distribution almost fully converges to the mean. More importantly, the reward of using the naive approach of setting all the weights fixed to the largest dual variable results in a lower overall reward. 

\begin{figure}[t]
	\centering
	\includegraphics[scale=1]{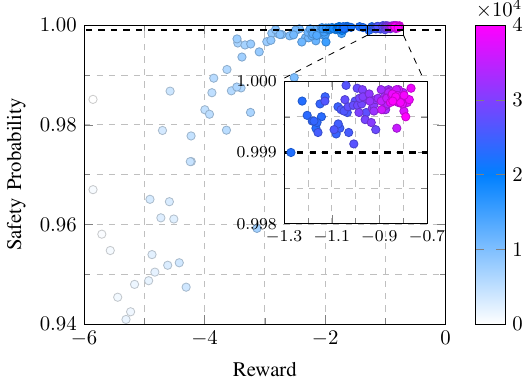} 
	\caption{Resulting reward vs. safety trajectory of the proposed algorithm. The resulting mean reward and safety probability (of the worst constraint) are shown for $500$ independent evaluations of the policy at intervals of $200$ iterations as the policy converges. The iterations of the algorithm are shown as the coloring of the plotted points. The safety requirement $1-\delta_i=0.999$ is shown as a dashed line.}
	\label{fig:trajectoryreward}
\end{figure}

Our previous discussions regarding the results illustrated in Figures \ref{fig:dualVariables} and \ref{fig:reward} highlight the main issues of using classical reward-shaping when attempting to learn safe policies. Namely, it is not clear how to choose the weights. Using weights that are too small will results in policies that are unsafe, while on the other hand, using weights that are too large, will ultimately result in large reductions of the reward attained by the policy. Regarding the latter, using weights that are too large can result in policies that are safe simply because they largely disregard the task reward. In general, classical reward-shaping approaches are not good enough to learn safe policies, as they need to be manually fine-tuned to the problem, which is time and computationally costly. Compared to these approaches, our primal-dual formulation retains some of the benefits of the previous approaches, mainly, part of it resorts to a primal maximization step, which can be computed by many traditional RL methods, such as the well-known policy gradient. More so, our method dynamically chooses the weight via an equivalence between weights and dual variables, in a way that is methodological and guaranteed to attain a good trade-off between safety and reward.

Finally, we plot in Figure \ref{fig:trajectoryreward} the reward-safety trajectory of the primal-dual algorithm (cf., Algorithm \ref{alg_pd}). We show a scatter plot that becomes warmer as the iterations of the algorithm increase. This plot is composed of points relating the mean safety probability (for the worst constraints) against the mean attained reward for $500$ independent evaluations of the policy at intervals of $200$ iterations. As the algorithm is run, both the overall safety and attained reward of the policy increase, until the desired level of safety is attained. Afterwards, the policy attempts to increase the reward while maintaining the desired level of safety. In this case, this shows that early termination of the training of the policy can result in policies that, while being suboptimal, will attain the level of safety required.

\subsection{Complex navigation scenario}

\begin{figure}[t]
	\centering
					
\begin{tikzpicture}
	\pgfplotsset{grid style={opacity=0}}
	\begin{axis}[
		xlabel near ticks,
		ylabel near ticks,
		ticklabel style={font=\small},
		xtick={0,88,...,880},
		ytick={0,88,...,528},
		xticklabels={},
		yticklabels={},
    	width=0.9\columnwidth,
    	height=0.9*0.6\columnwidth,
    	scale only axis,
    	enlargelimits=false,
		grid=both,
		axis on top
	]
	
    \addplot graphics[xmin=0,xmax=880,ymin=0,ymax=528] 
    		{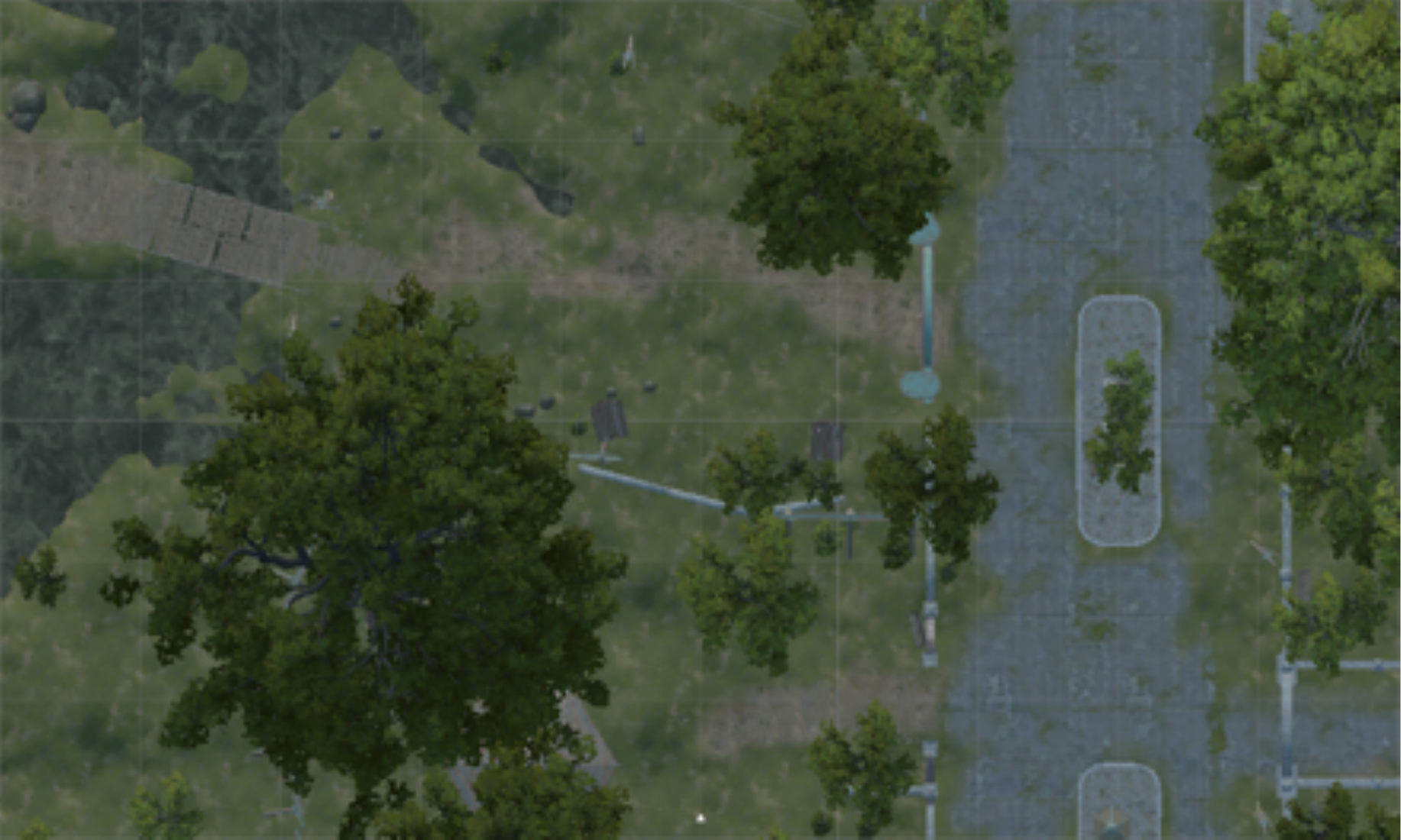};

    \addplot graphics[xmin=0,xmax=880,ymin=0,ymax=528] 
    		{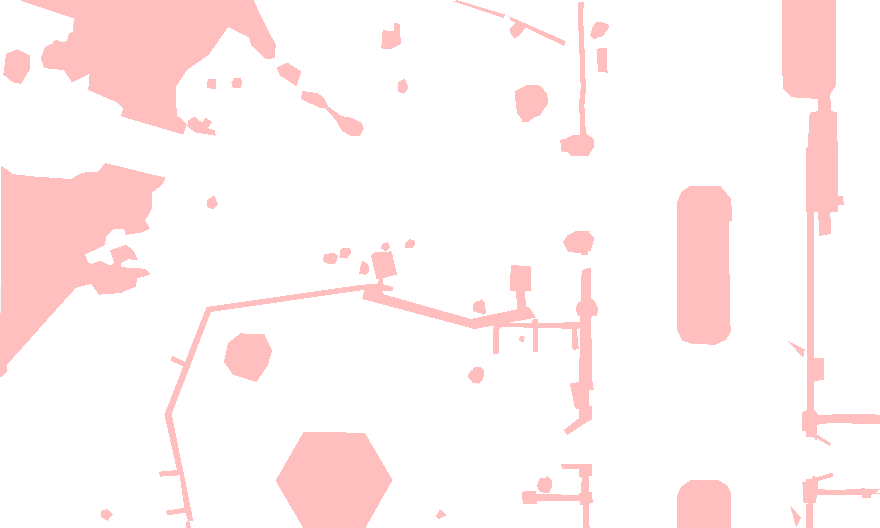};
	
\node[opacity=0.75,scale=0.35, draw, circle, solid, line width=1, minimum height =0.1cm, minimum width =0.1cm, draw=blue!70, fill=blue!50] (Agent1Start) at (880-850,528-150) {};
	
\node[opacity=0.75,scale=0.35, draw, circle, solid, line width=1, minimum height =0.1cm, minimum width =0.1cm, draw=green!70, fill=green!50] (Agent1Final) at (880-20,528-460) {};
	
	\addplot [solid,blue!30, line width=0.7,mark=none,mark options={blue!30,solid}]
		table[x expr=\thisrowno{1},y expr=528-\thisrowno{0}, col sep=comma]{img/dataTrajectoryROSSimulated.dat};

\node[scale=0.05, draw, circle, solid, line width=1, minimum height =0.1cm, minimum width =0.1cm, draw=red!30, fill=red!30] (Agent1Step2) at (29,528-151) {};
\node[scale=0.05, draw, circle, solid, line width=1, minimum height =0.1cm, minimum width =0.1cm, draw=red!30, fill=red!30] (Agent1Step3) at (46,528-150) {};
\node[scale=0.05, draw, circle, solid, line width=1, minimum height =0.1cm, minimum width =0.1cm, draw=red!30, fill=red!30] (Agent1Step4) at (63,528-149) {};
\node[scale=0.05, draw, circle, solid, line width=1, minimum height =0.1cm, minimum width =0.1cm, draw=red!30, fill=red!30] (Agent1Step5) at (82,528-148) {};
\node[scale=0.05, draw, circle, solid, line width=1, minimum height =0.1cm, minimum width =0.1cm, draw=red!30, fill=red!30] (Agent1Step6) at (101,528-145) {};
\node[scale=0.05, draw, circle, solid, line width=1, minimum height =0.1cm, minimum width =0.1cm, draw=red!30, fill=red!30] (Agent1Step7) at (126,528-148) {};
\node[scale=0.05, draw, circle, solid, line width=1, minimum height =0.1cm, minimum width =0.1cm, draw=red!30, fill=red!30] (Agent1Step8) at (155,528-152) {};
\node[scale=0.05, draw, circle, solid, line width=1, minimum height =0.1cm, minimum width =0.1cm, draw=red!30, fill=red!30] (Agent1Step9) at (185,528-159) {};
\node[scale=0.05, draw, circle, solid, line width=1, minimum height =0.1cm, minimum width =0.1cm, draw=red!30, fill=red!30] (Agent1Step10) at (215,528-170) {};
\node[scale=0.05, draw, circle, solid, line width=1, minimum height =0.1cm, minimum width =0.1cm, draw=red!30, fill=red!30] (Agent1Step11) at (247,528-182) {};
\node[scale=0.05, draw, circle, solid, line width=1, minimum height =0.1cm, minimum width =0.1cm, draw=red!30, fill=red!30] (Agent1Step12) at (280,528-194) {};
\node[scale=0.05, draw, circle, solid, line width=1, minimum height =0.1cm, minimum width =0.1cm, draw=red!30, fill=red!30] (Agent1Step13) at (379,528-202) {};
\node[scale=0.05, draw, circle, solid, line width=1, minimum height =0.1cm, minimum width =0.1cm, draw=red!30, fill=red!30] (Agent1Step14) at (437,528-206) {};
\node[scale=0.05, draw, circle, solid, line width=1, minimum height =0.1cm, minimum width =0.1cm, draw=red!30, fill=red!30] (Agent1Step15) at (489,528-220) {};
\node[scale=0.05, draw, circle, solid, line width=1, minimum height =0.1cm, minimum width =0.1cm, draw=red!30, fill=red!30] (Agent1Step16) at (541,528-222) {};
\node[scale=0.05, draw, circle, solid, line width=1, minimum height =0.1cm, minimum width =0.1cm, draw=red!30, fill=red!30] (Agent1Step17) at (586,528-206) {};
\node[scale=0.05, draw, circle, solid, line width=1, minimum height =0.1cm, minimum width =0.1cm, draw=red!30, fill=red!30] (Agent1Step18) at (629,528-213) {};
\node[scale=0.05, draw, circle, solid, line width=1, minimum height =0.1cm, minimum width =0.1cm, draw=red!30, fill=red!30] (Agent1Step19) at (651,528-252) {};
\node[scale=0.05, draw, circle, solid, line width=1, minimum height =0.1cm, minimum width =0.1cm, draw=red!30, fill=red!30] (Agent1Step20) at (656,528-291) {};
\node[scale=0.05, draw, circle, solid, line width=1, minimum height =0.1cm, minimum width =0.1cm, draw=red!30, fill=red!30] (Agent1Step21) at (663,528-330) {};
\node[scale=0.05, draw, circle, solid, line width=1, minimum height =0.1cm, minimum width =0.1cm, draw=red!30, fill=red!30] (Agent1Step22) at (683,528-370) {};
\node[scale=0.05, draw, circle, solid, line width=1, minimum height =0.1cm, minimum width =0.1cm, draw=red!30, fill=red!30] (Agent1Step23) at (719,528-397) {};
\node[scale=0.05, draw, circle, solid, line width=1, minimum height =0.1cm, minimum width =0.1cm, draw=red!30, fill=red!30] (Agent1Step24) at (805,528-450) {};
\node[scale=0.05, draw, circle, solid, line width=1, minimum height =0.1cm, minimum width =0.1cm, draw=red!30, fill=red!30] (Agent1Step25) at (840,528-459) {};
\node[scale=0.05, draw, circle, solid, line width=1, minimum height =0.1cm, minimum width =0.1cm, draw=red!30, fill=red!30] (Agent1Step26) at (860,528-455) {};

\draw[->, solid, red!30, line width=0.05, >=latex] (Agent1Step2) to (Agent1Step3);	
\draw[->, solid, red!30, line width=0.05, >=latex] (Agent1Step3) to (Agent1Step4);	
\draw[->, solid, red!30, line width=0.05, >=latex] (Agent1Step4) to (Agent1Step5);	
\draw[->, solid, red!30, line width=0.05, >=latex] (Agent1Step5) to (Agent1Step6);	
\draw[->, solid, red!30, line width=0.05, >=latex] (Agent1Step6) to (Agent1Step7);	
\draw[->, solid, red!30, line width=0.05, >=latex] (Agent1Step7) to (Agent1Step8);	
\draw[->, solid, red!30, line width=0.05, >=latex] (Agent1Step8) to (Agent1Step9);	
\draw[->, solid, red!30, line width=0.05, >=latex] (Agent1Step9) to (Agent1Step10);	
\draw[->, solid, red!30, line width=0.05, >=latex] (Agent1Step10) to (Agent1Step11);	
\draw[->, solid, red!30, line width=0.05, >=latex] (Agent1Step11) to (Agent1Step12);	
\draw[->, solid, red!30, line width=0.05, >=latex] (Agent1Step12) to (Agent1Step13);	
\draw[->, solid, red!30, line width=0.05, >=latex] (Agent1Step13) to (Agent1Step14);	
\draw[->, solid, red!30, line width=0.05, >=latex] (Agent1Step14) to (Agent1Step15);	
\draw[->, solid, red!30, line width=0.05, >=latex] (Agent1Step15) to (Agent1Step16);	
\draw[->, solid, red!30, line width=0.05, >=latex] (Agent1Step16) to (Agent1Step17);	
\draw[->, solid, red!30, line width=0.05, >=latex] (Agent1Step17) to (Agent1Step18);	
\draw[->, solid, red!30, line width=0.05, >=latex] (Agent1Step18) to (Agent1Step19);	
\draw[->, solid, red!30, line width=0.05, >=latex] (Agent1Step19) to (Agent1Step20);	
\draw[->, solid, red!30, line width=0.05, >=latex] (Agent1Step20) to (Agent1Step21);	
\draw[->, solid, red!30, line width=0.05, >=latex] (Agent1Step21) to (Agent1Step22);	
\draw[->, solid, red!30, line width=0.05, >=latex] (Agent1Step22) to (Agent1Step23);	
\draw[->, solid, red!30, line width=0.05, >=latex] (Agent1Step23) to (Agent1Step24);	
\draw[->, solid, red!30, line width=0.05, >=latex] (Agent1Step24) to (Agent1Step25);	
\draw[->, solid, red!30, line width=0.05, >=latex] (Agent1Step25) to (Agent1Step26);	
	
\end{axis}
\end{tikzpicture}
	\caption{Navigation of a complex agent in a realistic simulation. Major obstacles are highlighted in red. The agent starts at the blue position and navigates to the green goal. The ideal trajectory for the trained agent are illustrated by the red arrows, while the real trajectory experienced by the agent in the ROS simulation is shown in blue. The scale of the environment is $100 \times 60$ meters, with each line submark corresponding to $10$ meters.}
	\label{fig:complex_environment}
\end{figure}

We now consider the application of our proposed methodology to a more complex and realistic scenario. To do so, we perform simulations in the Robot Operating System (ROS) supported by the Unity physics engine. The environment that we consider for these simulations is shown in Fig. \ref{fig:complex_environment}. In this map, the agent must navigate from the blue dot in the top left corner to the green goal on the bottom right corner while avoiding the obstacles highlighted in red. The agent used for this task is a Clearpath Husky, a differential drive wheeled robot, whose physics interactions with the environment are realistically simulated by our ROS/Unity setup. 

The state space of the agent consists of its x- and y-axis positions together with its heading angle $\psi$. Furthermore, the agent takes actions corresponding to its linear $ v$ and angular $\dot \psi$ velocities, and the model has a sampling time $T_s$ which is set to $T_s=0.2$.  The system has the following dynamics
\begin{align}
\begin{bmatrix}
x_{t+1} \\
y_{t+1} \\
\psi_{t+1}
\end{bmatrix}
=
\begin{bmatrix}
x_{t} \\
y_{t} \\
\psi_{t}
\end{bmatrix}
+
T_s
\begin{bmatrix}
{v}_t \cos(\psi_t) \\
{v}_t \sin(\psi_t)  \\
\dot{\psi_t}
\end{bmatrix}.
\end{align}

In order to operate the Husky platform, we aim to learn a drive controller producing linear and angular velocities using Algorithm \ref{alg_pd} with the same parameters as in the previous numerical results. The agent is trained to avoid the obstacles, highlighted in red in Fig. \ref{fig:complex_environment}. Then, once the policy has been learned, it is deployed in the realistic ROS simulator and evaluated with high-fidelity Unity physics to obtain a more realistic portrayal of the performance of our policy.

Figure \ref{fig:complex_environment} illustrates the resulting trajectories, where the ideal trained trajectory corresponds to the red arrows and the trajectory followed by the agent in the ROS simulation is shown in blue. First, we verify that the agent successfully reaches the goal while avoiding all obstacles. Indeed, in the realistic ROS simulation it does so while following closely the ideal trained trajectory. In this regard, we can more carefully observe how the agent starts close to a bridge surrounded by water (obstacles). This corresponds to a very risky region as it is easy to run into obstacles. Thus, to maintain safety, the agent keeps a very low velocity which allows it to stay close to the ideal trajectory. Once the agent crosses the bridge, it increases its velocity as the region is devoid of major obstacles and it is safe to do so. As it crosses the arch into the main road, the agent turns and minimally reduces the speed to not collide with the central obstacle of the paved road. However, we see how the realistic ROS simulation deviates slightly from the ideal trajectory at high velocity turns. The trained model does not account for the differential drive of the Husky platform, which has a hard time turning at high speed and which is accounted for in the ROS simulation. Nonetheless, the agent is capable is staying close to the ideal trajectory, safely navigating to the goal.


\section{Conclusions}\label{sec_conclusions}
In this paper, we have studied the problem of learning safe policies in RL problems. More specifically, we have introduced safety into the problem as a probabilistic version of positive invariance. These constraints are then relaxed to provide expressions that are algorithmically useful, i.e., we can compute gradients of the constraints with respect to the parameters of the policy. These relaxations are such that they guarantee that the safety requirements are preserved for finite horizon operation and until an arbitrarily large horizon for discounted infinite horizon problems.  In addition, we show that the relaxed problem has a duality gap that can be made arbitrarily small and thus primal-dual algorithms are suitable to solve the safe RL problem. Numerical results for an agent navigating a world filled with hazardous obstacles, in a complex and realistic environment, show that the proposed scheme dynamically adapts the cost of safety. Compared to previous approaches, our proposed scheme provides safe policies with guarantees and a systematic way of achieving them, without being reliant on the manual tuning of parameters. 


\appendix

\section{Appendix}

\subsection{Proof of Lemma \ref{T:prob_inequality}}\label{proof_lemma_safety}

We proceed by a sequence of implications. Notice that 
\begin{equation}
	\sum_{t = 0}^{\infty} \mu_t \mathbb{P}(\ccalE_t)
		= \sum_{t = 0}^{\infty} \mu_t - \sum_{t = 0}^{\infty} \mu_t \mathbb{P}(\bar{\ccalE_t})
		\text{,}
\end{equation}
where~$\bar{\ccalE_t}$ denotes the complement of $\ccalE_t$, for all $t\geq 0$. Hence,
\begin{equation}
	\sum_{t = 0}^{\infty} \mu_t \mathbb{P}(\ccalE_t) \geq \sum_{t = 0}^{\infty} \mu_t - \mu^\prime \delta
	\Rightarrow
	\sum_{t = 0}^{\infty} \mu_t \mathbb{P}(\bar{\ccalE_t}) \leq \mu^\prime \delta
		\text{.}
\end{equation}
Since all summands are non-negative, we can lower bound the right-hand side by truncating the summation at~$T$, obtaining
\begin{equation}
	\sum_{t = 0}^{\infty} \mu_t \mathbb{P}(\bar{\ccalE_t}) \leq \mu^\prime \delta
	\Rightarrow
	\sum_{t = 0}^{T} \mu_t \mathbb{P}(\bar{\ccalE_t}) \leq \mu^\prime \delta
		\text{.}
\end{equation}
Since the~$\mu_t$ are non-increasing, it holds that~$\sum_{t = 0}^{T} \mu_t \mathbb{P}(\bar{\ccalE_t}) \geq \mu_T \sum_{t = 0}^{T} \mathbb{P}(\bar{\ccalE_t})$. Then, using the fact that~$\mu^\prime \leq \mu_T$ yields
\begin{equation}
	\sum_{t = 0}^{T} \mu_t \mathbb{P}(\bar{\ccalE_t}) \leq \mu^\prime \delta
	\Rightarrow
	\sum_{t = 0}^{T} \mathbb{P}(\bar{\ccalE_t}) \leq \delta
		\text{.}
\end{equation}
To conclude, we can apply the Boole-Fr\'{e}chet-Bonferroni inequality~\cite[1.6.10]{durrett2010probability} to obtain
\begin{equation}
	\sum_{t = 0}^{T} \mathbb{P}(\bar{\ccalE_t}) \leq \delta
	\Rightarrow
	\mathbb{P}\left( \bigcap_{t = 0}^{T} \ccalE_t \right) \geq 1 - \delta
		\text{.}\qedhere
\end{equation}
%

{
\subsection{Proof of Lemma \ref{L:convexity}}\label{proof_convexity}
Denote by $\boldsymbol{0}_m\in\mathbb{R}^m$ the null vector. Notice that $(P^\star,\boldsymbol{0}_m^\top)^\top \in \ccalC$. Indeed, for the policy $\pi^\star$ that solves \eqref{P:constrainedRL} we have that $V(\pi^\star)=P^\star$ and that $U_i(\pi^\star)-c_i\geq 0$. Thus, $\ccalC$ is not empty.

Let $\xi^1, \xi^2 \in \ccalC$ and let $\pi^1,\pi^2\in \ccalP_{s}(\ccalA)$ be respective policies that satisfy $V(\pi^\ell) \geq \xi^\ell_0$ and $U_i(\pi^\ell) \geq \xi^\ell_i$ for all $i=1,\ldots m$ and $\ell = \left\{1,2\right\}$. To show that $\ccalC$ is convex it suffices to establish that for any $\mu\in(0,1)$, it holds that $\mu\xi^1+(1-\mu)\xi^2 \in\ccalC$. 

To do so we start by writing the value function $V(\pi)$ in the following form (see e.g., \cite{borkar1988convex})
\begin{equation}\label{eqn_v_occupation}
(1-\gamma)V(\pi) = \int_{\ccalS\times\ccalA} r(s,a)\rho(s,a) \, ds da.
\end{equation}
where $\rho(s,a)$ is the occupation meassure induced by policy $\pi$ (cf., \eqref{eqn_occupation_measure}). Likewise we can write
\begin{equation}
(1-\gamma)U_i(\pi) = \int_{\ccalS\times\ccalA} \mathbbm{1}(s\in\ccalS_i) \rho(s,a)\, ds da.
\end{equation}
Denote by $\ccalM(\ccalS,\ccalA)$ the measures over $\ccalS\times\ccalA$ and define the set $\ccalR$ as the set of all occupation measures induced by the policies $\pi\in\ccalP_s(\ccalA)$ as
\begin{equation}
\begin{split}
\ccalR & \triangleq \left\{\rho\in\ccalM(\ccalS,\ccalA)\Big| \right. \\
&  \left.\rho(s,a)=\left(1-\gamma\right)\left(\sum_{t=0}^\infty \gamma^tp_\pi(s_t=s,a_t=a)\right) \right\}.
\end{split}
\end{equation}
It follows from \cite[Theorem 3.1]{borkar1988convex} that the set of occupation measures $\ccalR$ is convex and compact. Let $\rho^1,\rho^2\in\ccalR$ be the occupancy measures associated to the policies $\pi^1,\pi^2 \in \ccalP_s(\ccalA)$. Since the set is compact, there exists $\pi^\mu\in\ccalP_s(\ccalA)$ such that $\mu \rho^1+(1-\mu)\rho^2 \in \ccalR$. Using the linearity of the integral it follows that 
\begin{align}
V(\pi^\mu) &= \frac{\mu}{1-\gamma} \int_{\ccalS\times\ccalA} r(s,a)\rho^1(s,a) \, ds da  \nonumber \\
&+\frac{1-\mu}{1-\gamma}\int_{\ccalS\times\ccalA} r(s,a)\rho^2(s,a) \, ds da.
\end{align}
Using the expression in \eqref{eqn_v_occupation} it follows that $V(\pi^\mu) = \mu V(\pi^1) +(1-\mu)V(\pi^2)$. Which shows that $V(\pi^\mu) \geq \mu\xi^1_0+(1-\mu)\xi^2_0$. Using the same arguments we can establish that for every $i=1,\ldots m$ we have that $U_i(\pi^\mu) \geq \mu\xi^1_i+(1-\mu)\xi^2_i$. This completes the proof that for any $\xi^1,\xi^2\in\ccalC$ and for any $\mu \in(0,1)$ the convex combination is also contained in $\ccalC$, i.e., $\mu\xi^1+(1-\mu)\xi^2\in\ccalC$. Thus, the set $\ccalC$ is convex.

}

%
\subsection{Proof of Lemma~\ref{lemma_meassure_bound}}\label{proof_meassure_bound}
  Let us start by writing the left hand side of \eqref{eqn_occupation_bound} as
  \begin{equation}\label{aux_measure_aux}
    \begin{split}
      &\int_{\ccalS\times\ccalA} |\rho(s,a)-\rho_\theta(s,a)| \,dsda  \\
      &= (1-\gamma) \int_{\ccalS\times\ccalA} \left|\sum_{t=0}^\infty \gamma^t\left( p^t_\pi(s,a)-p_\theta^t(s,a)\right)\right| \,dsda.
\end{split}
    \end{equation}
  Using the triangle inequality, we upper bound \eqref{aux_measure_aux} as
    \begin{equation}\label{eqn_aux_occupation_bound1}
\begin{split}
  &\int_{\ccalS\times\ccalA} |\rho(s,a)-\rho_\theta(s,a)| \,dsda  \\
  &\leq (1-\gamma) \sum_{t=0}^\infty \gamma^t\int_{\ccalS\times\ccalA} \left| p_\pi^t(s,a)-p^t_\theta(s,a)\right| \,dsda .
\end{split}    \end{equation}
    Notice that to complete the proof it suffices to show that the right hand side of the previous expression is bounded by $\epsilon/(1-\gamma)$. We next work towards that end, and we start by bounding the difference $\left| p_\pi^t(s,a)-p_\theta^t(s,a)\right|$. Notice that this difference can be upper bounded using the triangle inequality as
    \begin{align}\label{eqn_aux_bound_diff_p}
      \left| p_\pi^t(s,a)-p_\theta^t(s,a)\right| &\leq p_\pi^t(s)\left| \pi(a|s)-\pi_\theta(a|s)\right| \nonumber\\
      &+\pi_\theta(a|s)\left|p_\pi^t(s)-p_\theta^t(s)\right|.
  \end{align}
It follows from Definition \ref{def_universal_param} that
    \begin{equation}\label{eqn_aux_occupation_bound2}
\int_{\ccalS\times\ccalA} p_\pi^t(s)\left| \pi(a|s)-\pi_\theta(a|s)\right|\, ds da \leq\epsilon \int_{\ccalS} p_\pi^t(s)\, ds = \epsilon,
      \end{equation}
    where the last equality follows from the fact that $p_\pi^t(s)$ is a density and thus integrates to one. We next work towards bounding the integral of the second term in \eqref{eqn_aux_bound_diff_p}. Using the fact that $\pi_\theta(a|s)$ is a density, it follows that 
    \begin{equation}
\int_{\ccalS\times\ccalA} \pi_\theta(a|s)\left|p_\pi^t(s)-p_\theta^t(s)\right|\, ds da = \int_{\ccalS} \left|p_\pi^t(s)-p_\theta^t(s)\right|\, ds.
      \end{equation}
        Notice that the previous difference is zero for $t=0$ and for any $t>0$ it can be upper bounded by
                \begin{align}\label{eqn_aux_occupation_bound3}
                  &\int_{\ccalS} \left|p_\pi^t(s)-p_\theta^t(s)\right|\, ds \nonumber \\
                  &\leq \int_{\ccalS} \int_{\ccalS\times\ccalA} p(s|s^\prime,a^\prime)\left|p_\pi^{t-1}(s^\prime,a^\prime)-p_\theta^{t-1}(s^\prime,a^\prime)\right|\, ds ds^\prime da^\prime \nonumber \\
& = \int_{\ccalS\times\ccalA} \left|p_\pi^{t-1}(s^\prime,a^\prime)-p_\theta^{t-1}(s^\prime,a^\prime)\right|\, ds^\prime da^\prime
      \end{align}
Combining the bounds in \eqref{eqn_aux_occupation_bound2}, \eqref{eqn_aux_occupation_bound2} and \eqref{eqn_aux_occupation_bound3} yields
\begin{align}
  &  (1-\gamma) \sum_{t=0}^\infty \gamma^t\int_{\ccalS\times\ccalA} \left|\left( p_\pi^t(s,a)-p_\theta^t(s,a)\right)\right| \,dsda \leq \nonumber \\ 
&(1-\gamma) \sum_{t=1}^\infty \gamma^t\int_{\ccalS\times\ccalA} \left|\left( p_\pi^{t-1}(s,a)-p_\theta^{t-1}(s,a)\right)\right| \,ds da \nonumber \\ 
  &+  (1-\gamma) \sum_{t=0}^\infty \gamma^t\epsilon
    \end{align}
Notice that the second term in the right hand side of the previous expression is the sum of the geometric multiplied by $1-\gamma$. Hence we have that $ (1-\gamma) \sum_{t=0}^\infty \gamma^t\epsilon=\epsilon $. The first term in the right hand side of the previous expression is in fact the same as the term in the left hand side of the expression multiplied by the discount factor $\gamma$. Thus, rearranging the terms, the previous expression implies that 
\begin{equation}
  (1-\gamma) \sum_{t=0}^\infty \gamma^t\int_{\ccalS\times\ccalA} \left|\left( p_\pi^t(s,a)-p_\theta^t(s,a)\right)\right| \,dsda \leq \frac{\epsilon}{1-\gamma}.
\end{equation}
This completes the proof of the Lemma.

\subsection{Proof of Theorem~\ref{T:parametrization}}\label{proof_param}
The lower bound on $D_\theta^\star$ follows directly from weak duality \cite[Chapter 5]{boyd2004convex}. To prove the upper bound notice that the dual functions $d(\lambda)$ and $d_\theta(\lambda)$ associated to the problems \eqref{P:constrainedRL} and \eqref{P:parametric} respectively are such that for every $\lambda \in \mathbb{R}^m_+$ we have that $  d_\theta(\lambda) \leq d(\lambda)$. The latter follows from the fact that the set of maximizers of the Lagrangian for the parameterized policies is contained in the set of maximizers of stochastic kernels. In particular, this holds for $\lambda^\star$ the solution to \eqref{P:dualRL}. Hence we have the following sequence of inequalities
  \begin{equation}
D^\star = d(\lambda^\star) \geq d_{\theta}(\lambda^\star) \geq D^\star_\theta, 
    \end{equation}
  where the last inequality follows from the fact that $D^\star_\theta$ is the minimum of \eqref{P:dual_parametric}. The zero duality gap established in Theorem \ref{T:strongDuality} allows us to upper bound $D_\theta^\star \leq P^\star$, where $P^\star $ is the value of the problem \eqref{P:constrainedRL}. To complete the proof of the result we are left to show that $P^\star \leq P^\star_\theta +  \left(B_{r}+\left\|\lambda_\epsilon^\star\right\|_1\right) \epsilon/(1-\gamma)$. Notice that if the perturbed problem \eqref{P:perturbed} with $\xi_i = \epsilon/(1-\gamma)$ is infeasible, then the dual optimum is infinite and the result holds trivially. Hence we focus on the case where \eqref{P:perturbed} is feasible. We claim the following three facts to be proven. (i) Let $\pi^\star_\epsilon$ be the solution to the perturbed problem \eqref{P:perturbed} with $\xi_i= \epsilon/(1-\gamma)$ for all $i = 1,\ldots, m$. Then, 
\begin{equation}
P^\star \leq P_\epsilon^\star + \left\|\lambda_\epsilon^\star\right\|_1 \frac{\epsilon}{1-\gamma} = V(\pi^\star_\epsilon)+ \left\|\lambda_\epsilon^\star\right\|_1 \frac{\epsilon}{1-\gamma}
\end{equation}
(ii) Let $\pi^\star_{\epsilon,\theta}$ be an element of the $\epsilon$-parameterization that approximates $\pi^\star_\epsilon$ in the sense of Definition \ref{def_universal_param}. Then
\begin{equation}
V(\pi^\star_\epsilon) \leq V(\pi^\star_{\epsilon,\theta}) + \frac{B_r \epsilon}{1-\gamma}. 
\end{equation}
(iii) Lastly we claim that $\pi^\star_{\epsilon,\theta}$ is feasible for \eqref{P:parametric}. And therefore $V(\pi^\star_{\epsilon,\theta})\leq P^\star_\theta$.  These three facts complete the proof. 

We start by showing (i).  Similar to the proof of Theorem \ref{T:strongDuality}, we can define the following set
\begin{align}\label{eqn_convex_set2}
\ccalC_\epsilon = \bigl\{\xi \in \mathbb{R}^{m+1}\big| & \exists \pi \in \ccalP_{s}(\ccalA), V(\pi) \geq \xi_0, \nonumber\\
&U_i(\pi) \geq c_i+\frac{\epsilon}{1-\gamma}+\xi_i, \forall i=1,\ldots, m \bigr\}.
\end{align} 
Analogous to the proof of Lemma \ref{L:convexity} the previous set is convex and non-empty. In addition, just as we did in the proof of Theorem \ref{T:strongDuality} we have that $(P^\star_\epsilon, \boldsymbol{0}_m^\top)^\top$ is at the boundary of $\ccalC_\epsilon$ and that the separating hyperplane at that point is defined by $\tilde{\lambda} = (1,({\lambda}^\star_\epsilon)^\top)^\top$.  Hence, for any $\xi\in\ccalC_\epsilon$ it follows that 
\begin{equation}
		(P_\epsilon^\star,\boldsymbol{0}_m^\top) \tilde{\lambda}\geq \xi^\top \tilde{\lambda}.
		\end{equation}
{In particular, notice that the the point $(P^\star, -\mathbbm{1}_m\frac{\epsilon}{1-\gamma})\in\ccalC_\epsilon$. Indeed, for $\pi^\star$ optimal policy of \eqref{P:constrainedRL} we have that $V(\pi^\star) = P^\star$ and $U_i(\pi^\star) \geq c_i $.  } Thus, substituting in the previous equation $\xi = (P^\star, -\mathbbm{1}_m\frac{\epsilon}{1-\gamma})$ and $\tilde{\lambda} = (1,({\lambda}^\star_\epsilon)^\top)^\top$ it follows that 
\begin{equation}
		P_\epsilon^\star \geq P^\star -\frac{\epsilon}{1-\gamma} \left\|\lambda^\star_\epsilon\right\|_1.
		\end{equation}
This completes the proof of the claim (i). 

}

To establish (ii) notice that we can write the difference $V(\pi^\star_\epsilon)-V(\pi^\star_{\epsilon,\theta})$ as 
\begin{equation}
V(\pi^\star_\epsilon)-V(\pi^\star_{\epsilon,\theta}) = \int_{\ccalS\times\ccalA} r(s,a) \left( d\rho^\star_\epsilon -d\rho_{\epsilon,\theta}^\star\right),
\end{equation}
where in the previous expression $\rho_\epsilon^\star$ and $\rho_{\epsilon,\theta}^\star$ are the occupancy measures corresponding to the policies $\pi^\star_\epsilon$ and $\pi^\star_{\theta,\epsilon}$ respectively (cf., \eqref{eqn_occupation_measure}).  Since $\pi_{\epsilon,\theta^\star}$ is by definition an $\epsilon$ approximation of $\pi_\epsilon^\star$ it follows from Lemma \ref{lemma_meassure_bound} that
    \begin{equation}
      \int_{\ccalS\times\ccalA} \left|d \rho^\star_\epsilon-d\rho_{\epsilon,\theta}^\star(\lambda)\right|  \leq \frac{\epsilon}{1-\gamma}.
    \end{equation}
Using the fact that the rewards are bounded by $B_r$ completes the proof of (ii). 

The proof of (iii) is analogous to that of (ii) but reasoning with the constraints functions $U_i$.

\bibliographystyle{ieeetr}
\bibliography{bib,ml}
\begin{IEEEbiography}[{\includegraphics[width=1in,height=1.25in,clip,keepaspectratio]{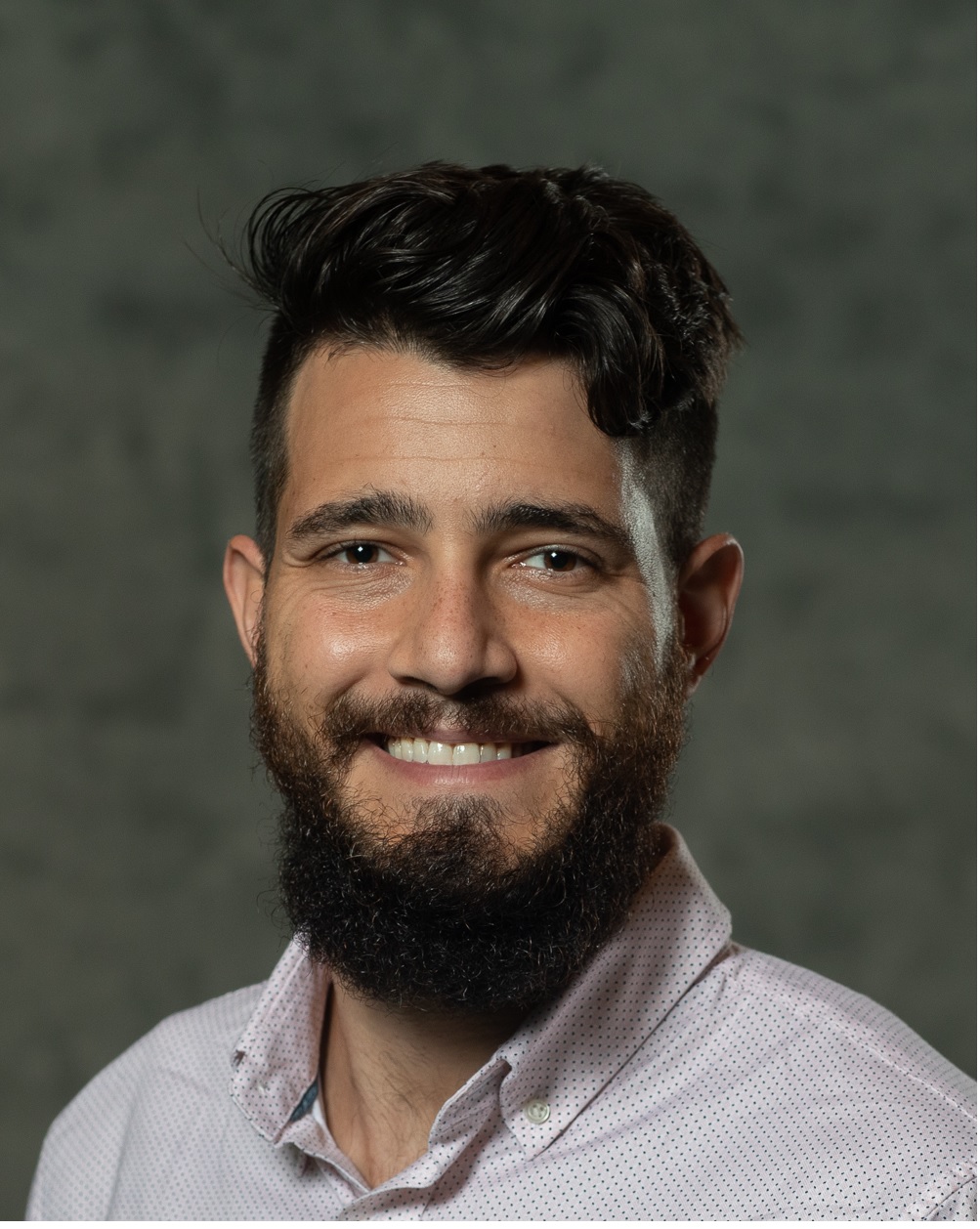}}]
  {Santiago Paternain} received the B.Sc. degree in electrical engineering from Universidad de la República Oriental del Uruguay, Montevideo, Uruguay in 2012, the M.Sc. in Statistics from the Wharton School in 2018 and the Ph.D. in Electrical and Systems Engineering from the Department of Electrical and Systems Engineering, the University of Pennsylvania in 2018. He is currently an Assistant Professor in the Department of Electrical Computer and Systems Engineering at the Rensselaer Polytechnic Institute. Prior to joining Rensselaer, Dr. Paternain was a postdoctoral Researcher at the University of Pennsylvania. His research interests lie at the intersection of machine learning and control of dynamical systems. Dr. Paternain was the recipient of the 2017 CDC Best Student Paper Award and the 2019 Joseph and Rosaline Wolfe Best Doctoral Dissertation Award from the Electrical and Systems Engineering Department at the University of Pennsylvania. 
\end{IEEEbiography}
\begin{IEEEbiography}[{\includegraphics[width=1in,height=1.25in,clip,keepaspectratio]{./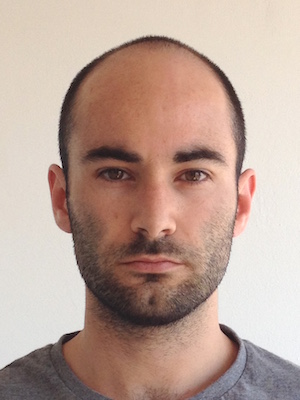}}]{Miguel Calvo-Fullana} received his B.Sc. degree in electrical engineering from the Universitat de les Illes Balears (UIB), in 2012 and the M.Sc. and Ph.D. degrees in electrical engineering from the Universitat Polit\`ecnica de Catalunya (UPC), in 2013 and 2017, respectively. From September 2012 to July 2013 he was a research assistant with Nokia Siemens Networks (NSN) and Aalborg University (AAU). From December 2013 to July 2017, he was with the Centre Tecnol\`ogic de Telecomunicacions de Catalunya (CTTC) as a research assistant. From September 2017 to September 2020, he was a postdoctoral researcher at the University of Pennsylvania. Since September 2020, he is a postdoctoral researcher at the Massachusetts Institute of Technology. His research interests lie in the broad areas of learning and optimization for autonomous systems. In particular, he is interested in multi-robot systems with an emphasis on wireless communication and network connectivity.
 \end{IEEEbiography}
\begin{IEEEbiography}[{\includegraphics[width=1in,height=1.25in,clip,keepaspectratio]{./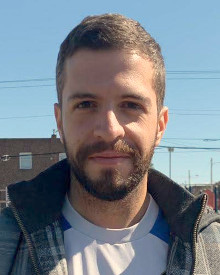}}]{Luiz F.O. Chamon}~(S’12, M'21) received the B.Sc. and M.Sc. degrees in electrical engineering from the University of São Paulo, São Paulo, Brazil, in 2011 and 2015 and the Ph.D. degree in electrical and systems engineering from the University of Pennsylvania (Penn), Philadelphia, in 2020. He is currently a postdoc at the Simons Institute of the University of California, Berkeley. In 2009, he was an undergraduate exchange student of the Masters in Acoustics of the École Centrale de Lyon, Lyon, France, and worked as an Assistant Instructor and Consultant on nondestructive testing at INSACAST Formation Continue. From 2010 to 2014, he worked as a Signal Processing and Statistics Consultant on a research project with EMBRAER. In 2018, he was recognized by the IEEE Signal Processing Society for his distinguished work for the editorial board of the IEEE Transactions on Signal Processing. He also received both the best student paper and the best paper awards at IEEE ICASSP 2020. His research interests include optimization, signal processing, machine learning, statistics, and control.

\end{IEEEbiography}
\begin{IEEEbiography}[{\includegraphics[width=1in,height=1.25in,clip,keepaspectratio]{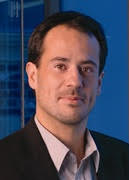}}]{Alejandro Ribeiro}  received the B.Sc. degree in electrical engineering from the Universidad de la Republica Oriental del Uruguay, Montevideo, in 1998 and the M.Sc. and Ph.D. degree in electrical engineering from the Department of Electrical and Computer Engineering, the University of Minnesota, Minneapolis in 2005 and 2007. From 1998 to 2003, he was a member of the technical staff at Bellsouth Montevideo. After his M.Sc. and Ph.D studies, in 2008 he joined the University of Pennsylvania (Penn), Philadelphia, where he is currently Professor of Electrical and Systems Engineering. His research interests are in the applications of statistical signal processing to collaborative intelligent systems. His specific interests are in wireless autonomous networks, machine learning on network data and distributed collaborative learning. Papers coauthored by Dr. Ribeiro received the 2014 O. Hugo Schuck best paper award, and paper awards at CDC 2017, SSP Workshop 2016, SAM Workshop 2016, Asilomar SSC Conference 2015, ACC 2013, ICASSP 2006, and ICASSP 2005. His teaching has been recognized with the 2017 Lindback award for distinguished teaching and the 2012 S. Reid Warren, Jr. Award presented by Penn’s undergraduate student body for outstanding teaching. Dr. Ribeiro is a Fulbright scholar class of 2003 and a Penn Fellow class of 2015.
\end{IEEEbiography}



\end{document}